\DeclarePairedDelimiter{\abs}{\lvert}{\rvert}
\numberwithin{equation}{section}
\theoremstyle{plain}
\newtheorem{theo}{Theorem}[section]
\newtheorem{definition}[theo]{Definition}
\newtheorem{proposition}[theo]{Proposition}
\newtheorem{theorem}[theo]{Theorem}
\newtheorem{lemma}[theo]{Lemma}
\theoremstyle{definition}
\newtheorem{remark}[theo]{\textbf{Remark}}
\newcommand{\sign}{\operatorname{sign}}             
\DeclareMathOperator{\sgn}{sgn}
\begin{document}



\author{}

\title{Quantum one way vs. classical two way communication in XOR games}

\author{Abderram\'an Amr $^1$}
\author{Ignacio Villanueva $^2$}
\address{Departamento de An\'alisis Matem\'atico y Matem\'atica Aplicada, Universidad Complutense de Madrid, Plaza de Ciencias 3, 28040 Madrid, Spain}
\email[$^1$]{abamr@ucm.es}
\email[$^2$]{ignaciov@mat.ucm.es}

\maketitle

\begin{abstract}

In this work we give an example of exponential separation between quantum and classical resources in the setting of XOR games assisted with communication. 

Specifically, we show an example of a XOR game for which $O(n)$ bits of two way classical communication are needed in order to achieve the same value as can be attained with $\log n$ qubits of one way communication.

We also find a characterization for the value of a XOR game assisted with a limited amount of two way communication in terms of tensor norms of normed spaces.\end{abstract}

\section{Introduction and main results}

From the foundations point of view, one of the main goals in Quantum Information is to quantify the difference in performance between quantum and classical resources for a given task. In particular, this quantification has been thoroughly studied in the context of Bell inequalities (see e.g. review \cite{reviewnonlocality}). In the XOR games bipartite scenario, two separate parties, Alice and Bob, are given inputs  $x$ and $y$ and they answer their outputs $a, b=\pm 1$  with certain probability, therefore generating a correlation. The set of correlations they can generate sharing a quantum state and performing local measurements on it is different to the set that they can generate with classical resources, and this difference can be witnessed with the so called Bell inequalities \cite{Bell}. 

Another important setting where this comparison between quantum and classical resources has been studied is the communication complexity scenario \cite{nisan}. In this context, one typically  studies  the minimum number of bits that Alice and Bob have to exchange in order to correctly (up to a bounded probability of error) compute a boolean function for any pair of inputs $x$ and $y$. In this scenario, partial boolean functions have been found for which quantum and classical communication complexities are exponentially separated \cite{expcom}. 

In \cite{JPV} the authors introduced a new setting, which, from a conceptual point of view, can be seen as a mixture of the two previously defined scenarios. In this new setting, Alice and Bob try to win a XOR game using shared classical randomness together with  the communication of a limited number of bits,  either classical or quantum.

In \cite{JPV}, only the one way communication case was studied. That is, the case when  communication is only allowed from one of the parties to the other. Among other results, the authors show an example of a game for which $O(n)$ bits of one way classical communication are needed in order to achieve the same value as the one that can be attained with $\log n$ qubits of one way communication. They left open the question of the existence of a game for which one can obtain the same order of exponential separation between the one way quantum communication and the general (two way) classical communication. 

In our work we answer this question positively. We show that actually the same game appearing in \cite{JPV} achieves this exponential separation, which, as in the one way case, is  the maximum possible separation, up to a logarithmic factor.

We state next some notation needed for the statement of our main result. A more detailed description of our notation can be found in \cite{JPV}. 

A bipartite XOR game $T$ with $R$ and $S$ inputs for Alice and Bob respectively is a linear functional described by a matrix $(T_{x,y})_{x,y=1}^{R,S}$, where $\sum_{x,y} |T_{x,y}|=1$. It describes the situation where Alice and Bob are asked the pair of questions $(x,y)$ with probability $|T_{x,y}|$ and, in order to win the game,  they must output answers $a,b \in \{\pm 1\}$ verifying $ab=\sgn(T_{x,y})$

We call  $\mathcal{L}_{tw,c}$ (respectively $\mathcal{Q}_{ow,c}$) to the convex set of the correlations Alice and Bob can generate when they are allowed the use of  shared randomness and $c$-bits of two way classical communication (respectively $c$ qubits of one way communication). Then, given a XOR game $T$ we can consider the following two quantities: 

\begin{equation*}
\omega_{tw,c}(T)=\sup_{P\in\mathcal{L}_{tw,c}}|\langle T,P\rangle|\text{ and }\omega_{tw,c}^*(M)=\sup_{P\in\mathcal{Q}_{tw,c}}|\langle M,P\rangle|
\end{equation*}

With this notation, our main result can be stated.

\begin{theorem}\label{main}
For every $n\in \mathbb N$, there exist a XOR game $T$ with $2^{2n}$ inputs for Alice and $2^{n^2}$ inputs for Bob such that, for every $k\in \mathbb N$, 
$$\frac{\omega_{ow,\log n}^*(T)}{\omega_{tw,\log k}(T)}\geq C\frac{\sqrt{n}}{\log k},$$
where $C$ is a constant independent of $n,k$.  
\end{theorem}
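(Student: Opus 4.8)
The plan is to keep the very game $T$ of \cite{JPV} and to prove separately the two bounds behind the claimed ratio. For the numerator nothing new is required: \cite{JPV} already produces an explicit one way quantum protocol with $\log n$ qubits achieving $\omega_{ow,\log n}^{*}(T)\ge c_{0}$ for an absolute constant $c_{0}>0$. Hence the whole problem reduces to the upper bound
\[
\omega_{tw,\log k}(T)\ \le\ C'\,\frac{\log k}{\sqrt n},
\]
after which Theorem~\ref{main} follows by division.

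To bound $\omega_{tw,c}(T)$ I would first pass to deterministic protocols: $\mathcal L_{tw,c}$ is the convex hull of the correlations of deterministic $c$-bit two way protocols and $P\mapsto|\langle T,P\rangle|$ is convex, so it suffices to bound $|\langle T,P\rangle|$ for one such $P$. A deterministic $c$-bit two way protocol partitions the input grid into at most $2^{c}$ combinatorial rectangles $R_{1},\dots,R_{N}$, one per leaf of the protocol tree, and once the leaf $R_{i}=A_{i}\times B_{i}$ is reached each party's output depends only on its own input, so on $R_{i}$ the correlation is of product form $P_{x,y}=\alpha^{i}_{x}\beta^{i}_{y}$ with $\alpha^{i},\beta^{i}$ sign valued. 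Setting $\mu_{i}=\sum_{(x,y)\in R_{i}}|T_{x,y}|$, so that $\sum_{i}\mu_{i}=1$, and letting $\widehat T_{i}=T|_{R_{i}}/\mu_{i}$ be the renormalized subgame, the triangle inequality gives
\[
|\langle T,P\rangle|\ \le\ \sum_{i=1}^{N}\bigl|\langle T|_{R_{i}},\alpha^{i}\otimes\beta^{i}\rangle\bigr|\ \le\ \sum_{i=1}^{N}\mu_{i}\,\omega(\widehat T_{i}).
\]
This is exactly the combinatorial shadow of the tensor norm characterization of $\omega_{tw,c}$ announced in the abstract, which records the same rectangle decomposition together with the consistency constraint coming from the fact that the $R_{i}$ partition the grid.

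The technical heart is then a discrepancy type estimate for this particular game: for every combinatorial rectangle $R$, with $\widehat T_{R}=T|_{R}/\mu(R)$,
\[
\omega(\widehat T_{R})\ \le\ \frac{C_{1}}{\sqrt n}\,\varphi\!\Bigl(\tfrac1{\mu(R)}\Bigr),
\]
where $\varphi$ grows only logarithmically. Here the structure of the JPV game is essential: its one way quantum versus classical gap is engineered from a random matrix, and the value of any of its subgames is accordingly controlled by a Chevet type inequality, whose right hand side involves only $\sqrt{\log(\#\mathrm{inputs})}$ and which degrades gracefully under restriction; this is the two dimensional counterpart of the one way estimate already carried out in \cite{JPV}. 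I expect this to be the main obstacle: a two way transcript does not merely shrink Bob's message set, it carves out a genuinely two dimensional rectangle, so one has to control \emph{all} subgames of $T$ at once, and with the sharp logarithmic dependence on the mass, rather than the single auxiliary $\ell_{1}^{2^{c}}$ summation that appeared in the one way analysis.

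Granting the estimate, the conclusion is routine. Since $\sum_{i}\mu_{i}=1$ and $N\le 2^{c}$, the elementary inequality $\sum_{i}\mu_{i}\log(1/\mu_{i})\le\log N\le c\log 2$ gives $\sum_{i}\mu_{i}\,\varphi(1/\mu_{i})=O(c)$, hence $\omega_{tw,c}(T)\le C'c/\sqrt n$. With $c=\log k$ this is the required upper bound, and combining it with the quantum lower bound yields $\omega_{ow,\log n}^{*}(T)/\omega_{tw,\log k}(T)\ge c_{0}\sqrt n/(C'\log k)$, which is Theorem~\ref{main}.
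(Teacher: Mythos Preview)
Your plan rests on a misquotation of the quantum lower bound: \cite[Proposition~5.6]{JPV} (quoted verbatim in the paper as Proposition~\ref{lower bound XOR game}) gives only $\omega^{*}_{ow,\log n}(T)\ge C/\sqrt{n}$, not $\omega^{*}_{ow,\log n}(T)\ge c_{0}$. With the correct numerator the target classical bound you set yourself, $\omega_{tw,\log k}(T)\le C'\log k/\sqrt{n}$, is useless: the ratio collapses to $O(1/\log k)$ and the $\sqrt{n}$ disappears. What is actually needed is $\omega_{tw,\log k}(T)\le C'(\log k)^{\alpha}/n$ for some $\alpha$; the paper proves exactly this with $\alpha=3/2$ (Proposition~\ref{p:twbound}). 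So the rectangle discrepancy bound you are willing to ``grant'' is off by a full factor of $\sqrt{n}$ --- and note that even if one could push the discrepancy of sub-rectangles down to $C_{1}\varphi(1/\mu(R))/n$, the entropy step $\sum_{i}\mu_{i}\log(1/\mu_{i})\le c$ only controls the logarithmic envelope, so you would still owe a genuine proof of this uniform $1/n$ estimate over \emph{all} rectangles.

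The paper does not go through rectangles at all. It keeps the randomized protocol in its tensor form (Equations~(\ref{e:Alice})--(\ref{e:Bob})), applies H\"older with $p'=\log k$ to the sum over transcripts $(\overline m,\overline n)$, then Cauchy--Schwarz in the $(i,j)$ variable, and finally the transposed Khintchine and double Khintchine inequalities (Lemma~\ref{l:transpose Khintchine}) in the $x,z$ and $y$ variables separately. The message-dependency structure is absorbed by the elementary Lemma~\ref{messagesbound}, which says that for each fixed $(x,z,y)$ the product $|{\bf\overline a}\cdot{\bf\overline b}|$ is a subprobability in $(\overline m,\overline n)$. This analytic route produces the required $1/n$ directly (one $1/\sqrt{n}$ from each Khintchine application, via the normalization $L\sim n\,2^{n^{2}+2n}$) and bypasses the need for any per-rectangle discrepancy estimate.
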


This result implies the above mentioned exponential separation: Alice and Bob need to communicate $k=O(n)$ classical bits to obtain the same value as the one obtained with $\log n$ qubits.  

The lower bound for the quantum communication value in our result is the same lower bound as in \cite{JPV}. The technical part  of our proof is to upper bound the two way classical communication value. To prove this upper bound we rely on techniques from the local theory of Banach spaces, in particular on a careful use of the Khintchine and double Khintchine inequalities. Also, careful reasoning is needed when handling the dependencies appearing between a message and the previous and following messages. 

Our second result is a characterization of $\omega_{tw, c}(T)$ in terms of tensor norms. Although not strictly needed for Theorem \ref{main}, this second result  was the starting point of this research and lies behind our ideas.

Consider a general two way protocol with $t$-rounds of classical communication. Alice starts the protocol and sends, in the $i$-th round, $c_i$ bits to Bob. After receiving those bits,  Bob sends $d_i$ bits back to Alice. Call momentarily $\omega_{tw}(T)$ to the maximum value that Alice and Bob can obtain with any possible protocol described as above when playing the XOR game $T$. Then our second result characterizes $\omega_{tw}(T)$ in terms of a specific tensor norm on certain spaces. Both the norm and the involved spaces will be defined in Section \ref{s:tensor norm}.

\begin{theorem}\label{norma}
Consider $=(T_{x,y})_{x,y=1}^{R,S}$ and   $\omega_{tw}(T)$  as above. Then the following holds:
$$\omega_{tw}(T)=\|T\otimes id\otimes id\otimes\ldots\otimes id\|_{\ell_1^R(\ell_\infty^{2^{c_1}}(\ell_1^{2^{d_1}}(\ldots(\ell_\infty^{2^{c_t}}(\ell_1^{2^{d_t}})))))\otimes_\epsilon \ell_1^{S 2^{c_1}}(\ell_\infty^{2^{d_1}}(\ldots(\ell_1^{2^{c_t}}(\ell_\infty^{2^{d_t}}))))}.$$
\end{theorem}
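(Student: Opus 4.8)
The plan is to identify $\omega_{tw}(T)$ with a supremum of a bilinear form over a product of unit balls, and then to recognize that supremum as the injective tensor norm in the statement. First I would observe that the correlations attainable with shared randomness and a $t$-round protocol of the prescribed shape form the convex hull of those attainable by \emph{deterministic} such protocols, and that $P\mapsto\langle T,P\rangle$ is linear; hence $\omega_{tw}(T)$ is the supremum of $|\langle T,P\rangle|$ over deterministic protocols, and it suffices to handle those.

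Write $X$ and $Y$ for the two nested sequence spaces appearing on Alice's and Bob's side of the asserted identity. The heart of the argument is the observation that a deterministic $t$-round protocol is \emph{precisely} a pair consisting of an extreme point of $B_{X^*}$ (Alice's strategy) and an extreme point of $B_{Y^*}$ (Bob's strategy). Passing to duals turns each $\ell_1$-factor into an $\ell_\infty$-factor and conversely, so $X^*$ and $Y^*$ are iterated sums of the opposite type, and their extreme points are obtained recursively: at an $\ell_\infty$-factor one chooses an entry independently in each coordinate --- a ``product'' over that index --- and at an $\ell_1$-factor one picks a single distinguished coordinate together with a sign. On Alice's side the product indices are exactly her input $x$ and the messages $m_1^B,\ldots,m_t^B$ she receives, the pick-one factors are exactly the messages $m_1^A,\ldots,m_t^A$ she sends, and the innermost factor, together with the signs collected along the selected branch, records her output as a function of the whole received transcript; on Bob's side the product indices are the pair $(y,m_1^A)$ at the outermost level and then $m_2^A,\ldots,m_t^A$, the pick-one factors are $m_1^B,\ldots,m_t^B$, and the accumulated signs record his output. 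Causality is built in: a round-$i$ message is chosen at a factor sitting inside all products over indices received before round $i$ and outside those received afterwards, so it depends on exactly the admissible prefix of the transcript; and, conversely, every extreme point of $B_{X^*}$ (resp.\ $B_{Y^*}$) has this shape and therefore encodes a legitimate deterministic strategy.

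Next I would check that, for extreme points $\alpha\in B_{X^*}$, $\beta\in B_{Y^*}$, the value of the associated protocol equals $\langle\mathcal T,\alpha\otimes\beta\rangle$, where $\mathcal T:=T\otimes id\otimes\cdots\otimes id$ is the element in the statement: $T$ pairs $\ell_1^R$ with the $\ell_1^S$ part of $Y$'s outermost factor $\ell_1^{S2^{c_1}}=\ell_1^S(\ell_1^{2^{c_1}})$, and each $id$ is the diagonal tensor pairing the two factors --- one from $X$ and one from $Y$ --- attached to a single message (the $\ell_\infty^{2^{c_i}}$ of $X$ with the $\ell_1^{2^{c_i}}$ of $Y$, encoding ``the round-$i$ message Alice sends is the one Bob receives'', and the $\ell_1^{2^{d_i}}$ of $X$ with the $\ell_\infty^{2^{d_i}}$ of $Y$ for the reply). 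When $\alpha\otimes\beta$ is fed into $\mathcal T$, the Kronecker deltas coming from the copies of $id$ force the message indices on the two sides to agree; since $\alpha$ is supported precisely on the transcripts consistent with Alice's strategy and $\beta$ on those consistent with Bob's, for each pair $(x,y)$ only the unique transcript produced by actually running the protocol survives, and the sum collapses to $\sum_{x,y}T_{x,y}\,a(x,\cdot)\,b(y,\cdot)=\langle T,P\rangle$.

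Combining these facts, $\omega_{tw}(T)=\sup\{|\langle\mathcal T,\alpha\otimes\beta\rangle|:\alpha\in\operatorname{Ext}(B_{X^*}),\ \beta\in\operatorname{Ext}(B_{Y^*})\}$. For fixed $\beta$ the map $\alpha\mapsto\langle\mathcal T,\alpha\otimes\beta\rangle$ is linear on the compact convex set $B_{X^*}$, so replacing $\operatorname{Ext}(B_{X^*})$ by $B_{X^*}$ does not change the supremum, and the same holds in $\beta$; hence $\omega_{tw}(T)=\sup\{|\langle\mathcal T,\alpha\otimes\beta\rangle|:\alpha\in B_{X^*},\ \beta\in B_{Y^*}\}$, which is by definition $\|\mathcal T\|_{X\otimes_\epsilon Y}$ --- the claimed identity. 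I expect the only genuinely delicate part to be the bookkeeping in the middle step: making sure the extreme points of these iterated $\ell_1/\ell_\infty$ balls correspond to deterministic protocols \emph{with the correct causal dependencies} --- in particular handling the asymmetry whereby $m_1^A$ is bundled with $y$ on Bob's side while $m_2^A,\ldots,m_t^A$ appear as separate factors --- and that the diagonal contractions really do glue ``sent'' to ``received''.
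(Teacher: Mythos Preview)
Your proposal is correct and follows essentially the same approach as the paper: both arguments rest on the identification of deterministic protocols with the extreme points of the dual unit balls of the iterated $\ell_1/\ell_\infty$ spaces (this is Lemma~\ref{extremal} in the paper), together with the definition of the $\epsilon$-norm as a supremum over those balls and a convexity argument. The only organizational difference is that the paper proves the two inequalities separately --- using Lemma~\ref{l:boundednorm} to place general (randomized) strategy tensors inside the unit balls for $\omega_{tw}(T)\le\|\cdot\|$, and the extreme-point characterization for the reverse --- whereas you first reduce $\omega_{tw}(T)$ to deterministic protocols and then use the bijection with extreme points to obtain both inequalities at once.
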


Theorem \ref{norma} has a rather clumsy statement due to the intrinsic difficulties of describing two way communication. The idea of the proof is to relate the deterministic communication protocols with the extreme points of the unit ball of $\ell_\infty^{S2^{c_1}}(\ell_1^{2^{d_1}}(\ldots(\ell_\infty^{2^{c_t}}(\ell_1^{2^{d_t}}))))$ and $\ell_\infty^R(\ell_1^{2^{c_1}}(\ell_\infty^{2^{d_1}}(\ldots(\ell_1^{2^{c_t}}(\ell_\infty^{2^{d_t}})))))$, and then to use these extreme points to compute the $\epsilon$ norm \cite{defantfloret} of the operator $T\otimes id\otimes id\otimes\ldots\otimes id$. This norm, in the proper Banach spaces, has already been used to define the classical value of a XOR game \cite{Tsirelson,PVreview} or the value of a XOR game with one way classical communication \cite{JPV}.

As a remark, we mention that  the techniques of Theorem \ref{norma} can be easily  applied to general games, that is, games with a general number of outputs, in order to describe their value when using the above protocol of two way classical communication. 

This paper is organized as follows: In Section \ref{s:cc} we will present the form of a general two way protocol explicitly, with the properties and the dependences of the corresponding messages that are being sent.  In section 3 we will present the proof of Theorem \ref{main}. This proof does not require tensor norms, although, as we said before, it is the tensor norm idea that lies behind our reasonings. Finally, in section 4, we prove Theorem \ref{norma}. In order to do this, we previously state the needed notions from Banach space theory and tensor norm theory. 

\section{Two way classical communication}\label{s:cc}

For the sake of completeness, and in order to fix our notation, we describe next randomized classical communication protocols, and the model associated to them, in the particular case of XOR games. 

We consider a protocol with $t$ rounds of two way classical communication between Alice and Bob. In round  $i$, first Alice will send $c_i$ bits to Bob and, after receiving them, Bob will send $d_i$ bits to Alice. After that, the round $i+1$ can begin.

We consider general randomized protocols and, therefore, the messages each agent sends are  random variables depending on the previous inputs of the corresponding agent. 

That is, we can view the first message $m_1$ of Alice as an application  $$M_1:[R] \longrightarrow \mathbb{R}^{2^{c_1}},$$ such that, for every $x\in [R]$,  $M_1(x):=(M_{1}^{m_1}(x))_{m_1=1}^{2^{c_1}}$ is a probability distribution on the possible messages $m_1$ sent by Alice when she receives input $x$. 

Bob's first message is a mapping $$N_1:[S]\times [2^{c_1}] \longrightarrow  \mathbb{R}^{2^{d_1}},$$ 
such that, for every $y\in [S]$ and $m_1\in [2^{c_1}]$ $N_1(y,m_1):=(N_1^{n_1}(y,m_1))_{n_1=1}^{2^{d_1}}$ is a probability distribution on the possible messages $n_1$ sent by Bob when he receives input $y$ and message $m_1$ from Alice. 

Similarly, Alice's and Bob's last messages are  mappings 
$$M_t:[R]\times [2^{d_1}] \times \dots \times [2^{d_{t-1}}]  \longrightarrow  \mathbb{R}^{2^{c_t}},$$  
and 

$$N_t:[S]\times [2^{c_1}] \times \dots \times [2^{c_{t}}]  \longrightarrow  \mathbb{R}^{2^{d_t}}.$$  

After they interchange messages, Alice and Bob produce ${\pm 1 }$-valued outputs $a(x,n_1, \dots, n_t)$, $b(y, m_1, \dots, m_t)$.

We will use the notation $\overline{m}, \overline{n}$ for the multiindices $(m_1,\dots,m_t)$, $(n_1\dots, n_t)$.  

Therefore, Alice's strategy is a function

 \begin{align*}
&{\bf a}: [R]\times [2^{d_1}]\times \dots \times [2^{d^t}]\longrightarrow \{\pm 1\}\times \mathbb{R}^{2^{c_1}}\times \dots \times \mathbb{R}^{2^{c_t}}\\
&{\bf a}(x,\overline{n})=(a(x,\overline{n}), M_1^{m_1}(x) ,M_2^{m_2}(x,n_1), \dots, M_t^{m_t}(x,n_1, \dots, n_{t-1})),
\end{align*}
which can be seen as a tensor 

\begin{align}\label{e:Alice}
&{\bf \overline{a}}=\left( {\bf \overline{a}}(\tilde{x}, \overline{m}, \overline{n})\right)_{x,  \overline{m},  \overline{n}}=\sum_{x, \overline{m},  \overline{n}} a(x, \overline{n}) M_1^{m_1}(x) M_2^{m_2}(x,n_1) \dots 
\\ \nonumber &\dots M_t^{m_t}(x,n_1, \dots, n_{t-1})
 e_x\otimes e_{m_1}\otimes \dots \otimes e_{m_t}\otimes e_{n_1}\otimes \dots\otimes e_{n_t}.\end{align}

Similarly, Bob's strategy is given by a function 

\begin{align*}
&{\bf b}: [S]\times [2^{c_1}]\times \dots \times [2^{c^t}]\longrightarrow \{\pm 1\}\times \mathbb{R}^{2^{d_1}}\times \dots \times \mathbb{R}^{2^{d_t}}\\
&{\bf b}(y, \overline{m})=(b(y, \overline{m}), N_1^{n_1}(y) ,N_2^{n_2}(y,m_1), \dots, N_t^{n_t}(y,m_1, \dots, m_{t})),
\end{align*}
which can be seen as a tensor 

\begin{align}\label{e:Bob}
&{\bf \overline{b}}= \left( {\bf \overline{b}}(y,  \overline{m}, \overline{n})\right)_{y, \overline{m},  \overline{n} }= \sum_{y, \overline{m}, \overline{n}} b(y, \overline{m}) N_1^{n_1}(y) N_2^{n_2}(y,m_1) \dots \\ &\nonumber \dots N_t^{n_t}(y,m_1, \dots, m_{t}) e_y\otimes e_{m_1}\otimes \dots \otimes e_{m_t}\otimes e_{n_1}\otimes \dots\otimes e_{n_t}.\end{align}

In future reasonings we will need the following result, which follows easily from the definitions. 
\begin{lemma}\label{l:boundednorm}
The tensors ${\bf \overline{a}}, {\bf \overline{b}}$ given in Equations (\ref{e:Alice}) and (\ref{e:Bob}) verify 
  $$\sup_{x} \sum_{m_1}\sup_{n_1}\sum_{m_2}\dots \sup_{n_{t-1}} \sum_{m_t}\sup_{n_t}  |a(x, \overline{n})M_1^{m_1}(x)\dots M_t^{m_t}(x,n_1, \dots, n_{t-1})|\leq 1, $$

and 

$$\sup_{y,m_1} \sum_{n_1}\sup_{m_2}\sum_{n_2}\dots \sup_{m_{t}} \sum_{n_t} |b(y, \overline{m})N_1^{n_1}(y)\dots 
N_t^{n_t}(y,m_1, \dots, m_{t})|\leq 1, $$
\end{lemma}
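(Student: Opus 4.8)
The plan is to prove both inequalities by unwinding the definitions of the probability distributions $M_i^{m_i}$, $N_i^{n_i}$ and repeatedly applying normalization of probability vectors, working from the innermost index outward. I would treat the two statements as mirror images of each other, prove the one for $\mathbf{\overline a}$ in detail, and remark that the one for $\mathbf{\overline b}$ follows by the same argument with the roles of the message blocks interchanged.

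First I would fix $x \in [R]$ and the multiindex $\overline n = (n_1,\dots,n_t)$, so that $a(x,\overline n) \in \{\pm 1\}$ is just a sign and contributes a factor $1$ in absolute value. Then the quantity in the supremum over $x$ is, for that fixed $x$,
\begin{equation*}
\sum_{m_1}\sup_{n_1}\sum_{m_2}\sup_{n_2}\cdots \sup_{n_{t-1}}\sum_{m_t}\sup_{n_t}\; M_1^{m_1}(x)\,M_2^{m_2}(x,n_1)\cdots M_t^{m_t}(x,n_1,\dots,n_{t-1}),
\end{equation*}
where I have dropped the absolute values since each $M_i^{m_i}(\cdot)$ is a nonnegative entry of a probability vector. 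Now I would peel off the variables from the inside: the innermost $\sup_{n_t}$ acts on nothing that depends on $n_t$ (the factors $M_1,\dots,M_t$ do not involve $n_t$), so it is vacuous; then $\sum_{m_t} M_t^{m_t}(x,n_1,\dots,n_{t-1}) = 1$ because $M_t(x,n_1,\dots,n_{t-1})$ is a probability distribution. This removes the last factor entirely, leaving
\begin{equation*}
\sum_{m_1}\sup_{n_1}\sum_{m_2}\cdots \sup_{n_{t-2}}\sum_{m_{t-1}} M_1^{m_1}(x)\cdots M_{t-1}^{m_{t-1}}(x,n_1,\dots,n_{t-2}),
\end{equation*}
which has exactly the same shape with $t$ replaced by $t-1$. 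Iterating this reduction $t$ times (formally, by induction on $t$) collapses the whole expression to $1$, giving the bound $\le 1$; taking the supremum over $x$ preserves it.

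The only subtle point — and the thing worth stating carefully rather than a genuine obstacle — is the bookkeeping of which variables each $M_i$ depends on, so that each $\sup_{n_j}$ really is applied to an expression independent of $n_j$ at the moment it is evaluated: $M_i^{m_i}(x,n_1,\dots,n_{i-1})$ depends on $n_1,\dots,n_{i-1}$ but not on $n_i,\dots,n_t$, which is exactly what makes the inside-out collapse legal. For $\mathbf{\overline b}$ one runs the identical argument fixing $(y,m_1)$ — note $N_i^{n_i}$ depends on $y,m_1,\dots,m_i$ — peeling off $\sum_{n_t} N_t^{n_t}(y,m_1,\dots,m_t)=1$ first (the preceding $\sup_{m_t}$ is then vacuous in the relevant factor), and so on down to $\sum_{n_1} N_1^{n_1}(y)=1$. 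This completes the proof.
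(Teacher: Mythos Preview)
Your proof is correct and follows essentially the same approach as the paper's: bound $|a(x,\overline n)|$ (resp.\ $|b(y,\overline m)|$) by $1$, then repeatedly use that each $M_i$ (resp.\ $N_i$) is a probability distribution in its outgoing message index to collapse the nested sums to $1$. The paper's proof is just a terser statement of exactly this inside-out peeling argument.
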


\begin{proof}
For the first case, bound $a(x, \overline{n})$ by 1 and recall that fixing $x, n_1,\ldots, n_i$ makes $\sum_{m_i}M_i^{m_i}(x,n_1,\ldots,n_{i-1})\leq1$ for all $i$. Proceed similarly for the second case. 
\end{proof}

\section{Proof of Theorem \ref{main}}\label{s:mainproof}

The game appearing in Theorem \ref{main} is the same that was already used in \cite{JPV} to prove a similar bound for the one way communication value. We recall the precise definition of the game here: 

\begin{definition}\label{game}
We consider the XOR game $T$ where the input of Alice is an element $\tilde{x}=(x,z) \in \{\pm 1\}^n \times \{\pm 1\}^n$ and the input of Bob is an element  $y\in\{\pm 1\}^{n^2}$. Then the coefficients $T_{\tilde{x},y}=T_{(x,z),y}$ take the following form: 
$$T_{(x,z),y}=\frac{1}{L}\sum_{i,j=1}^n x_iz_j y_{ij}$$

Where $L$ is a normalization factor in order to fulfill $\sum_{xzy}|T_{(x,z),y}|=1$, which means $L=\sum_{xyz}|\sum_{ij}x_iz_j y_{ij}|$.

\end{definition}

That is, the probability of question $(\tilde{x}, y)$ is  $\frac{1}{L}|\sum_{ij}x_iz_j y_{ij}|$ and the condition that the players have to fulfill with their answers in that case is $ab=\sign \sum_{i,j}x_i z_j y_{ij}$.

\begin{remark}\label{estimate}
The following estimate for the value of $L$ is given in \cite[Lemma 5.3]{JPV}:

$$\frac{1}{\sqrt{2}}n2^{n^2+2n}\leq L\leq n2^{n^2+2n}$$
\end{remark}

In order to prove Theorem \ref{main} we need to show a lower bound for the value with quantum communication and an upper bound for the value with classical communication. The quantum value was already proven in \cite{JPV}.

\begin{proposition}\cite[Proposition 5.6]{JPV}\label{lower bound XOR game}
Let $T$ be the XOR game defined in Definition \ref{game}. Then, 
\begin{align*}
\omega^*_{ow,\log n}(T)\geq  \frac{C}{\sqrt{n}},
\end{align*}where  $C$ is a constant independent of $n$.
\end{proposition}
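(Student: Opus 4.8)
The plan is to exhibit an explicit one-way quantum protocol using $\log n$ qubits and to estimate the bias it produces against the game $T$. The natural strategy, already used in \cite{JPV}, is for Alice to encode her two length-$n$ sign vectors into a single qubit register of dimension $n$ and for Bob to perform a measurement that extracts the relevant quadratic form $\sum_{i,j} x_i z_j y_{ij}$. Concretely, Alice prepares the normalized state $\ket{\psi_{x}}=\frac{1}{\sqrt n}\sum_{i=1}^n x_i \ket{i}$ (dimension $n$, i.e. $\log n$ qubits) and sends it to Bob; note that the vector $z$ is not used in the message itself but will be incorporated through Alice's classical post-processing of a shared random bit, or — more in the spirit of the construction — Alice sends $\frac1{\sqrt n}\sum_i x_i\ket i$ while also using $z$ to fix her output sign. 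Bob, holding $y\in\{\pm1\}^{n^2}$, measures in a basis (or with a two-outcome POVM $\{E_y, I-E_y\}$) built from the matrix $(y_{ij})_{ij}$, for instance using the observable $O_y = \frac{1}{\|Y\|}\sum_{i,j} z_j y_{ij}\,\ket{i}\bra{j}$ suitably symmetrized to be self-adjoint, and outputs $b=\pm1$ according to the outcome. The point is that the correlation $\mathbb{E}[ab]$ for question $((x,z),y)$ equals, up to normalization, $\bra{\psi_x} O_y \ket{\psi_x} \sim \frac1n \sum_{i,j} x_i z_j y_{ij}$.

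The key quantitative step is then to compute
$$
\omega^*_{ow,\log n}(T) \;\ge\; \big|\langle T, P\rangle\big| \;=\; \frac{1}{L}\sum_{x,z,y} \Big(\sum_{i,j} x_i z_j y_{ij}\Big)\cdot \frac{1}{n}\Big(\sum_{i,j} x_i z_j y_{ij}\Big)\cdot\frac{1}{n^{?}}\,,
$$
i.e. the bias is (a constant times) $\frac{1}{Ln}\sum_{x,z,y}\big(\sum_{ij}x_iz_jy_{ij}\big)^2$ divided by the number of triples, once all the $2^{n^2+2n}$ questions are averaged with the uniform weights implicit in $|T_{(x,z),y}|$. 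Using orthogonality of the characters $x_iz_j$ over $\{\pm1\}^n\times\{\pm1\}^n\times\{\pm1\}^{n^2}$, one gets $\sum_{x,z,y}\big(\sum_{ij}x_iz_jy_{ij}\big)^2 = n^2\, 2^{n^2+2n}$, the cross terms vanishing. Hence the bias is of order $\frac{n^2 2^{n^2+2n}}{L\, n}= \frac{n\, 2^{n^2+2n}}{L}$, and plugging in the estimate $L \le n 2^{n^2+2n}$ from Remark \ref{estimate} (we only need the upper bound on $L$ here, which lower-bounds the ratio) gives a universal constant lower bound on $\sum_{x,z,y} T_{(x,z),y} P_{(x,z),y}$ before dividing out; after correctly tracking the normalization of $P$ as a genuine correlation the $1/\sqrt n$ loss appears precisely from the fact that Bob's observable must have operator norm at most $1$, which forces a rescaling by $1/\|Y\|$ where typically $\|Y\|\sim\sqrt n$ for a random sign matrix. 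Keeping careful track of this operator-norm normalization is what produces the claimed $C/\sqrt n$.

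I expect the main obstacle to be precisely the operator-norm bookkeeping for Bob's measurement: Bob needs a valid two-outcome quantum measurement, so the observable he extracts the bias from must be a contraction, and the matrix $\frac1n\sum_{ij} z_j y_{ij}\ket i\bra j$ has operator norm of order $\sqrt n / n = 1/\sqrt n$ on average, which is exactly the source of the $1/\sqrt n$ in the statement; making this rigorous requires either a union/concentration bound on $\|(y_{ij})\|$ over the relevant $y$'s or, more cleanly, averaging the squared bias over $y$ and invoking $\mathbb{E}_y \|Y\|^2 \lesssim n$ (a standard random-matrix estimate) together with Jensen to pass to $\mathbb{E}_y$ of the bias. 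Since this Proposition is quoted verbatim from \cite[Proposition 5.6]{JPV}, for our purposes it suffices to cite it; the sketch above records why the exponent $1/\sqrt n$ is the right one and will not be reproved in detail.
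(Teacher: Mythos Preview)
The paper itself does not give a proof; it simply cites \cite[Proposition~5.6]{JPV}, and your closing sentence agrees that a citation suffices, so at that level you match the paper exactly.

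Your heuristic sketch, however, has a genuine gap worth flagging. You let Alice transmit $\ket{\psi_x}=\frac{1}{\sqrt n}\sum_i x_i\ket{i}$, which carries no information about $z$, and then have Bob measure $O_y=\frac{1}{\|Y\|}\sum_{i,j} z_j y_{ij}\ket{i}\bra{j}$. But $z$ is part of Alice's private input; Bob cannot form an observable that depends on it, and the remark that $z$ ``will be incorporated through Alice's classical post-processing'' does not repair this: for any Hermitian $B$ depending on $y$ alone, $a(z)\,\bra{\psi_x}B\ket{\psi_x}$ is a sign times a quadratic form in $x$ and can never reproduce the bilinear quantity $\sum_{ij}x_iz_jy_{ij}$. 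A correct one-way protocol must encode both $x$ and $z$ in the transmitted state --- for instance $\frac{1}{\sqrt{2n}}\sum_i(x_i\ket{i,0}+z_i\ket{i,1})$, still $\log n+O(1)$ qubits, measured by Bob against the Hermitian dilation $\frac{1}{\|Y\|}\bigl(\begin{smallmatrix}0&Y\\Y^{T}&0\end{smallmatrix}\bigr)$ --- which yields correlation $\frac{1}{n\|Y\|}\sum_{ij}x_iz_jy_{ij}$. Once the state is fixed in this way, your outlined computation (orthogonality of the characters $x_iz_j$, the upper bound on $L$ from Remark~\ref{estimate}, and the fact that $\|Y\|$ is of order $\sqrt n$ for a sign matrix) does go through and produces the claimed $C/\sqrt n$.
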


Our main contribution is the upper bound for the value with two way classical communication.  To make the proof easier to follow, we state first some lemmas. Some of them were already used in \cite{JPV}, but we recall them here for completeness and the convenience of the reader. 

First we state Khintchine and Double Khintchine inequalities in the precise form we will use. A proof of the double Khintchine inequality can be found in \cite[pag. 455]{defantfloret}. 

\begin{theorem}[Khintchine inequalities]\label{l:Khintchine}
For $1\leq p <\infty$ there exist constants $a_p, \text{}b_p\geq 1$ such that
\begin{align}\label{standard Khintchine ineq}
a_p^{-1}\left(\sum_{i=1}^n |\alpha_i|^2\right)^\frac{1}{2}\leq \left(\sum_{y\in \{\pm 1\}^n}\frac{1}{2^n}\Big|\sum_{i=1}^n \alpha_i y_i\Big|^p\right)^{\frac{1}{p}} \leq b_p\left(\sum_{i=1}^n |\alpha_i|^2\right)^\frac{1}{2}
\end{align}for every $n\in \mathbb N$ and all $\alpha_1, \cdots, \alpha_n \in \mathbb C$.

Moreover, 
\begin{align}\label{double Khintchine ineq}
a_p^{-2}\left(\sum_{i,j=1}^n |\alpha_{i,j}|^2\right)^\frac{1}{2}\leq \left(\sum_{x,z\in \{\pm 1\}^n}\frac{1}{2^{2n}}\Big|\sum_{i,j=1}^n \alpha_{i,j}x_i z_j \Big|^p\right)^{\frac{1}{p}} \leq b_p^2\left(\sum_{i,j=1}^n |\alpha_{i,j}|^2\right)^\frac{1}{2}
\end{align}for every $n\in \mathbb N$ and all $\alpha_{1,1}, \alpha_{1,2}, \cdots, \alpha_{n,n} \in \mathbb C$.
\end{theorem}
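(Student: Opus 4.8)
The plan is to prove the scalar inequality \eqref{standard Khintchine ineq} first and then obtain the double inequality \eqref{double Khintchine ineq} by applying the scalar one twice, once in each block of variables. Since only $|\cdot|$ occurs, I would first reduce to real scalars at the cost of a factor $\sqrt2$, splitting $\alpha=\mathrm{Re}\,\alpha+i\,\mathrm{Im}\,\alpha$ and using the pointwise bounds $\max(|\sum\mathrm{Re}\,\alpha_j y_j|,|\sum\mathrm{Im}\,\alpha_j y_j|)\le|\sum\alpha_j y_j|\le|\sum\mathrm{Re}\,\alpha_j y_j|+|\sum\mathrm{Im}\,\alpha_j y_j|$ together with the corresponding comparison of $\ell_2$-norms. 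So assume $\alpha_j\in\mathbb{R}$ and write $f(y)=\sum_{j=1}^n\alpha_j y_j$ on $\{\pm1\}^n$ with normalized counting measure; since the coordinate functions are orthonormal in $L^2$, one has $\|f\|_2=(\sum_j|\alpha_j|^2)^{1/2}$, and \eqref{standard Khintchine ineq} becomes the assertion that $\|f\|_p$ and $\|f\|_2$ are equivalent with constants depending only on $p$.

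For the upper bound I would take $p=2m$ an even integer and expand $\mathbb{E}|f|^{2m}=\sum\binom{2m}{k_1,\dots,k_n}\prod_j\alpha_j^{k_j}\,\mathbb{E}\prod_j y_j^{k_j}$; the expectation vanishes unless all $k_j$ are even (and is $1$ otherwise), and a count of the surviving multinomial coefficients yields $\mathbb{E}|f|^{2m}\le(2m-1)!!\,(\sum_j\alpha_j^2)^m$. For general $p$, choose the least even integer $2m\ge p$ and use monotonicity of $L^r$-norms on a probability space: $\|f\|_p\le\|f\|_{2m}\le b_p\|f\|_2$ with $b_p:=((2m-1)!!)^{1/(2m)}$. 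For the lower bound, if $p\ge2$ then $\|f\|_2\le\|f\|_p$ by the same monotonicity, so $a_p=1$; if $1\le p<2$, pick $\theta=p/(4-p)\in(0,1)$, so that $\tfrac12=\tfrac\theta p+\tfrac{1-\theta}{4}$, and by log-convexity of $L^r$-norms $\|f\|_2\le\|f\|_p^{\theta}\|f\|_4^{1-\theta}\le\|f\|_p^{\theta}(b_4\|f\|_2)^{1-\theta}$; dividing by the finite nonzero factor $\|f\|_2^{1-\theta}$ gives $\|f\|_2\le a_p\|f\|_p$ with $a_p:=b_4^{(1-\theta)/\theta}$.

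For \eqref{double Khintchine ineq}, fix $z\in\{\pm1\}^n$, set $\beta_i(z):=\sum_{j=1}^n\alpha_{ij}z_j$, and apply \eqref{standard Khintchine ineq} in the $x$-variables to $\sum_i\beta_i(z)x_i$; raising to the $p$-th power and averaging over $z$,
\[
a_p^{-p}\,\mathbb{E}_z\Big(\sum_i|\beta_i(z)|^2\Big)^{p/2}\;\le\;\mathbb{E}_z\,\mathbb{E}_x\Big|\sum_{i,j}\alpha_{ij}x_iz_j\Big|^p\;\le\;b_p^{\,p}\,\mathbb{E}_z\Big(\sum_i|\beta_i(z)|^2\Big)^{p/2}.
\]
Thus it remains to compare $A:=\big(\mathbb{E}_z(\sum_i|\beta_i(z)|^2)^{p/2}\big)^{1/p}=\big\|(\beta_i(z))_i\big\|_{L^p_z(\ell_2^n)}$ with $(\sum_{i,j}|\alpha_{ij}|^2)^{1/2}$. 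If $p\le2$, monotonicity of $L^r_z$-norms gives $A\le\big\|(\beta_i(z))_i\big\|_{L^2_z(\ell_2^n)}=(\mathbb{E}_z\sum_i|\beta_i(z)|^2)^{1/2}=(\sum_{i,j}|\alpha_{ij}|^2)^{1/2}$, while Minkowski's integral inequality — with the inner exponent $2$ being the larger one — gives $A\ge\big\|(\beta_i)_i\big\|_{\ell_2^n(L^p_z)}=(\sum_i\|\beta_i\|_{L^p_z}^2)^{1/2}\ge a_p^{-1}(\sum_{i,j}|\alpha_{ij}|^2)^{1/2}$, the last inequality by \eqref{standard Khintchine ineq} applied in the $z$-variables coordinatewise. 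For $p\ge2$ the two roles swap and $a_p=1$: monotonicity gives $A\ge\big\|(\beta_i)_i\big\|_{L^2_z(\ell_2^n)}=(\sum_{i,j}|\alpha_{ij}|^2)^{1/2}$, and Minkowski gives $A\le(\sum_i\|\beta_i\|_{L^p_z}^2)^{1/2}\le b_p(\sum_{i,j}|\alpha_{ij}|^2)^{1/2}$. Substituting into the display and using $a_p,b_p\ge1$ produces \eqref{double Khintchine ineq} with the stated constants $a_p^{-2}$ and $b_p^2$ — the squares reflecting that Khintchine is used twice. One could also simply invoke \cite[p.~455]{defantfloret}.

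The one step requiring care is the exchange of the $\ell_2$-sum over $i$ with the $L^p$-average over $z$ via Minkowski's integral inequality: for $p\le2$ it is needed only for the lower estimate and for $p\ge2$ only for the upper estimate, and in each case Minkowski already points in the direction one needs, the opposite estimate being just the monotonicity of $L^p$-norms on a probability space. Beyond that, the even-moment computation and the interpolation trick are classical, and the remaining effort is purely the bookkeeping of the constants $a_p$ and $b_p$.
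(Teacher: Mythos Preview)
Your argument is correct and entirely classical: the even-moment computation for the upper bound, the $L^p$--$L^4$ interpolation trick for the lower bound when $p<2$, and the two-step application (Khintchine in $x$, then Minkowski/monotonicity to pass from $L^p_z(\ell_2^n)$ to $\ell_2^n(L^p_z)$, then Khintchine in $z$) for the double inequality are all sound, and the Minkowski direction is indeed the right one in each regime of $p$.

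The paper, however, does not prove this theorem at all: it simply states the inequalities and refers to \cite[p.~455]{defantfloret} for the double version (the scalar Khintchine inequality being assumed as standard). You already note this reference yourself at the end of your proposal. So your write-up is not a different route so much as a fully worked-out proof where the paper is content with a citation; if the goal is to match the paper, the single sentence pointing to \cite{defantfloret} suffices, while if self-containment is desired your argument is a good choice.
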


In our reasonings we actually need the trasposed version of both Khintchine inequalities. We state the precise result. 

\begin{lemma}\label{l:transpose Khintchine} 
Let $1<p<\infty$ and let $p'$ be such that $\frac{1}{p}+\frac{1}{p'}=1$. 
Then, for every $n\in \mathbb N$ and for  every  sequence of numbers $(\alpha(y))_{y\in \{-1, 1\}^n}$,
\begin{align*}
\left(\sum_{i=1}^n \left( \sum_{y\in \{-1, 1\}^n} y_i \alpha(y)\right)^2\right)^\frac{1}{2}\leq b_{p'}^2  \left(2^{n}\right)^\frac{1}{p'} \left(\sum_{y\in \{-1, 1\}^n} |\alpha(y)|^p\right)^\frac{1}{p},
\end{align*}
where  $b_{p'}$ is the constant appearing in Lemma \ref{l:Khintchine} for $p'$.

Moreover, for every $n\in \mathbb N$ and for every finite sequence of numbers $(\alpha(x,z))_{(x,z)\in \{-1, 1\}^n\times  \{\pm 1\}^n}$,
\begin{align*}
\left(\sum_{i,j=1}^n \left( \sum_{(x,z)} x_i z_j \alpha(x,z)\right)^2\right)^\frac{1}{2}\leq b_{p'}^2  \left(2^{2n}\right)^\frac{1}{p'} \left(\sum_{(x,z)} |\alpha(x,z)|^p\right)^\frac{1}{p},
\end{align*} where  the sums in $(x,z)$ are over $\{\pm 1\}^n\times  \{\pm 1\}^n$ and $b_{p'}$ is  again the constant appearing in Lemma \ref{l:Khintchine}  for $p'$.
\end{lemma}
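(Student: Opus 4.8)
The plan is to derive both inequalities from the non-transposed Khintchine inequalities of Theorem~\ref{l:Khintchine} by Hölder's inequality, exploiting the fact that the linear map $\alpha\mapsto\big(\sum_y y_i\alpha(y)\big)_i$ from $\ell_p^{2^n}$ into $\ell_2^n$ is, up to the $2^n$ normalizing factor, the adjoint of the Rademacher embedding $\ell_2^n\to\ell_{p'}^{2^n}$ whose norm Khintchine controls. I will carry out the first inequality in full; the second one follows by the very same argument with the double Khintchine inequality replacing the single one.

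Write $\beta_i:=\sum_{y\in\{-1,1\}^n}y_i\alpha(y)$ for $i=1,\dots,n$, so the left-hand side of the first inequality equals $\big(\sum_{i=1}^n\beta_i^2\big)^{1/2}$; if this quantity is $0$ there is nothing to prove, so assume it is positive. Substituting the definition of one of the two factors $\beta_i$, exchanging the order of summation, and applying Hölder's inequality in the variable $y$ with exponents $p$ and $p'$ gives
\[
\sum_{i=1}^n\beta_i^2=\sum_{i=1}^n\beta_i\sum_{y\in\{-1,1\}^n}y_i\alpha(y)=\sum_{y}\alpha(y)\Big(\sum_{i=1}^n\beta_i y_i\Big)\leq\Big(\sum_{y}|\alpha(y)|^p\Big)^{1/p}\Big(\sum_{y}\Big|\sum_{i=1}^n\beta_i y_i\Big|^{p'}\Big)^{1/p'}.
\]
To the last factor I apply the upper estimate in the standard Khintchine inequality~(\ref{standard Khintchine ineq}) with exponent $p'$ and scalars $\alpha_i=\beta_i$, which after multiplying by $(2^n)^{1/p'}$ reads $\big(\sum_y|\sum_i\beta_i y_i|^{p'}\big)^{1/p'}\leq b_{p'}(2^n)^{1/p'}\big(\sum_i\beta_i^2\big)^{1/2}$. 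Substituting this bound into the display and cancelling one factor of $\big(\sum_i\beta_i^2\big)^{1/2}$ yields
\[
\Big(\sum_{i=1}^n\beta_i^2\Big)^{1/2}\leq b_{p'}(2^n)^{1/p'}\Big(\sum_{y}|\alpha(y)|^p\Big)^{1/p}\leq b_{p'}^2(2^n)^{1/p'}\Big(\sum_{y}|\alpha(y)|^p\Big)^{1/p},
\]
where the last inequality just uses $b_{p'}\geq 1$. This is the first claimed bound.

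For the second bound I repeat the argument verbatim with $\beta_{ij}:=\sum_{(x,z)}x_iz_j\alpha(x,z)$, using Hölder in the variable $(x,z)\in\{\pm1\}^n\times\{\pm1\}^n$ and then the double Khintchine inequality~(\ref{double Khintchine ineq}) with exponent $p'$ in place of~(\ref{standard Khintchine ineq}); here the constant produced is exactly $b_{p'}^2(2^{2n})^{1/p'}$, matching the statement. I do not expect a genuine obstacle: the only points needing care are the bookkeeping of the $2^n$ (respectively $2^{2n}$) factors that relate the normalized averages of Theorem~\ref{l:Khintchine} to the unnormalized sums in the statement, and the trivial handling of the degenerate case where the left-hand side vanishes.
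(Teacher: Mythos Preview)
Your argument is correct and is essentially the same duality approach the paper has in mind: the paper merely states that both statements follow from the (double) Khintchine inequality and defers to \cite[Lemma~5.4]{JPV}, whose proof is precisely the H\"older-plus-Khintchine computation you spell out. Your version is in fact more explicit, and you correctly observe that in the single-index case the argument yields the sharper constant $b_{p'}$, which you then relax to $b_{p'}^2$ using $b_{p'}\geq 1$ to match the statement.
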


\begin{proof}
The second statement follows from (\ref{double Khintchine ineq}). The proof can be seen in \cite[Lemma 5.4]{JPV}. The proof of the first statement is done similarly, using (\ref{standard Khintchine ineq}) rather than  (\ref{double Khintchine ineq}).
\end{proof}

We will also need the following simple consequence of  Holder's inequality. 

\begin{lemma}\label{lemma1}
For every $1<p<\infty$ and for every finite sequence of real numbers $(\alpha_i)_{i=1}^d$, 
$$\sum_{i=1}^d \abs{\alpha_i}\leq d^{1/p'}\Big(\sum_{i=1}^d\abs{\alpha_i}^p\Big)^{1/p},$$
where $\frac 1p + \frac{1}{p'}=1$
\end{lemma}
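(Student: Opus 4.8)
The plan is to obtain the inequality as a direct instance of Hölder's inequality. First I would note that both exponents $p$ and $p'$ lie in $(1,\infty)$ and satisfy $\frac1p+\frac1{p'}=1$, so Hölder's inequality applies to any pair of finite real sequences indexed by $\{1,\dots,d\}$.

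Next I would apply Hölder's inequality to the two sequences $(\abs{\alpha_i})_{i=1}^d$ and $(1)_{i=1}^d$, which yields
\begin{align*}
\sum_{i=1}^d \abs{\alpha_i} = \sum_{i=1}^d \abs{\alpha_i}\cdot 1 \leq \Big(\sum_{i=1}^d \abs{\alpha_i}^p\Big)^{1/p}\Big(\sum_{i=1}^d 1^{p'}\Big)^{1/p'}.
\end{align*}
Finally I would simplify $\big(\sum_{i=1}^d 1^{p'}\big)^{1/p'} = d^{1/p'}$ to conclude
\begin{align*}
\sum_{i=1}^d \abs{\alpha_i} \leq d^{1/p'}\Big(\sum_{i=1}^d \abs{\alpha_i}^p\Big)^{1/p},
\end{align*}
which is exactly the claimed estimate.

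There is no real obstacle here: the statement is precisely Hölder's inequality against the constant sequence, so the only thing to check is the bookkeeping of the exponents, namely that $1^{p'}=1$ and that the $\ell_{p'}$-norm of the constant vector $(1,\dots,1)\in\mathbb{R}^d$ equals $d^{1/p'}$. I would therefore keep the proof to the two displayed lines above.
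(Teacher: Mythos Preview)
Your proof is correct and matches the paper's approach exactly: the paper states the lemma as ``a simple consequence of H\"older's inequality'' without further detail, and your application of H\"older to the pair $(\abs{\alpha_i})_{i=1}^d$ and the constant sequence $(1,\dots,1)$ is precisely the intended one-line argument.
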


We state and prove  one more technical simple result.

\begin{lemma}\label{messagesbound}
Let ${\bf \overline{a}}, {\bf \overline{b}}$ be as in Equations (\ref{e:Alice}), (\ref{e:Bob}). Then, for every $(x, y)\in \mathcal X\times \mathcal Y$
$$\sum_{\overline{m}, \overline{n}}| {\bf \overline{a}}(x,\overline{m},\overline{n}) {\bf \overline{b}}(y, \overline{m}, \overline{n})|\leq1$$
\end{lemma}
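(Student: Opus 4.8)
The plan is to reduce the claimed inequality to the two bounds already recorded in Lemma~\ref{l:boundednorm} by pairing up the two multi-sums appropriately. Recall that $\overline{\bf a}(x,\overline m,\overline n)$ is a product of the output sign $a(x,\overline n)$ with the message-probabilities $M_1^{m_1}(x)M_2^{m_2}(x,n_1)\cdots M_t^{m_t}(x,n_1,\dots,n_{t-1})$, and similarly $\overline{\bf b}(y,\overline m,\overline n)$ is a product of $b(y,\overline m)$ with $N_1^{n_1}(y)N_2^{n_2}(y,m_1)\cdots N_t^{n_t}(y,m_1,\dots,m_t)$. Since $|a(x,\overline n)|=|b(y,\overline m)|=1$, the summand $|\overline{\bf a}(x,\overline m,\overline n)\,\overline{\bf b}(y,\overline m,\overline n)|$ is exactly the product of all the $M$-factors and all the $N$-factors, which is nonnegative, so the absolute values may be dropped and we must bound $\sum_{\overline m,\overline n}\prod_{i}M_i^{m_i}(\cdots)\prod_i N_i^{n_i}(\cdots)$ by $1$.

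First I would sum out the indices one layer at a time, alternating between an $m$-layer and an $n$-layer, starting from the innermost (round $t$). Summing over $n_t$: the only factor depending on $n_t$ is $N_t^{n_t}(y,m_1,\dots,m_t)$, and by definition $\sum_{n_t}N_t^{n_t}(y,m_1,\dots,m_t)\le 1$ (it is a probability distribution), so this sum disappears while leaving all other factors intact and only decreasing the total. Next sum over $m_t$: the only remaining factor depending on $m_t$ is $M_t^{m_t}(x,n_1,\dots,n_{t-1})$ (the factor $N_t^{n_t}$ is already gone, so its dependence on $m_t$ is irrelevant), and $\sum_{m_t}M_t^{m_t}(x,n_1,\dots,n_{t-1})\le 1$. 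Continuing inductively — sum over $n_{t-1}$ killing $N_{t-1}^{n_{t-1}}(y,m_1,\dots,m_{t-1})$, then over $m_{t-1}$ killing $M_{t-1}^{m_{t-1}}(x,n_1,\dots,n_{t-2})$, and so on down to $n_1$ and finally $m_1$ — each step multiplies the running bound by a quantity $\le 1$. After all $2t$ layers are summed out nothing is left, giving the bound $1$.

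The one point requiring care, and the only real obstacle, is checking that at each layer the factor being summed out is the \emph{last} one depending on that index, so that the remaining product genuinely factors through a single probability distribution: this is why one must peel from the innermost round $t$ outward and alternate $n_i$ before $m_i$ within each round. Concretely, $N_i^{n_i}$ depends on $m_1,\dots,m_i$ but not on $m_{i+1},\dots,m_t$, and $M_i^{m_i}$ depends on $n_1,\dots,n_{i-1}$ but not on $n_i,\dots,n_t$; hence once $n_{i+1},\dots,n_t$ and $m_{i+1},\dots,m_t$ have been eliminated, $N_i^{n_i}(y,m_1,\dots,m_i)$ is indeed the unique surviving factor carrying the index $n_i$, and afterwards $M_i^{m_i}(x,n_1,\dots,n_{i-1})$ is the unique surviving factor carrying $m_i$. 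Alternatively, and perhaps more cleanly, one can simply invoke Lemma~\ref{l:boundednorm} directly: bound $\sum_{\overline m,\overline n}|\overline{\bf a}\,\overline{\bf b}|\le \big(\sup_x\sum_{m_1}\sup_{n_1}\cdots\sum_{m_t}\sup_{n_t}|\overline{\bf a}|\big)\cdot\big(\sup_{y,m_1}\sum_{n_1}\cdots\sup_{m_t}\sum_{n_t}|\overline{\bf b}|\big)$ by distributing each sum to whichever tensor depends on that index and replacing the sums over the other tensor's indices by suprema; both factors on the right are $\le 1$ by Lemma~\ref{l:boundednorm}. Either route gives the result.
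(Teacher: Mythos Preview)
Your main argument is correct and is essentially identical to the paper's proof: both peel off the sums one layer at a time from the innermost round outward, in the order $n_t, m_t, n_{t-1}, m_{t-1}, \ldots$, using at each step that the corresponding $N_i^{n_i}(\cdots)$ or $M_i^{m_i}(\cdots)$ is a probability distribution and is the only surviving factor carrying that index.

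One caveat: the ``alternative'' you sketch at the end --- bounding $\sum_{\overline m,\overline n}|\overline{\bf a}\,\overline{\bf b}|$ by the product of the two mixed sup/sum expressions from Lemma~\ref{l:boundednorm} via a one-line ``distribution'' argument --- is not a valid general inequality, since both $\overline{\bf a}$ and $\overline{\bf b}$ depend on \emph{all} of the indices $\overline m,\overline n$ (for instance, $\sum_{i,j}[i=j]\cdot[i=j]=n$ while each corresponding sup--sum equals $1$). Making that alternative rigorous would require exactly the same layer-by-layer peeling you already carried out, so it is not really a shortcut.
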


\begin{proof}
Recalling the definitions of ${\bf \overline{a}}, {\bf \overline{b}}$, we have
\begin{align*} 
&\sum_{\overline{m}, \overline{n}}| {\bf \overline{a}}(x,\overline{m},\overline{n}) {\bf \overline{b}}(y, \overline{m}, \overline{n})| \\
&=\sum_{\overline{m}, \overline{n}} |   a(x,\overline{n}) M_1^{m_1}(x) M_2^{m_2}(x,n_1) \dots M_t^{m_t}(x,n_1, \dots, n_{t-1})   b(y,\overline{m}) N_1^{n_1}(y) N_2^{n_2}(y,m_1) \dots\\
& \dots N_t^{n_t}(y,m_1, \dots, m_{t})|\\
&\leq\sum_{\overline{m}, n_1, \dots, n_{t-1}} |  b(y,\overline{m})|  M_1^{m_1}(x) M_2^{m_2}(x,n_1) \dots M_t^{m_t}(x,n_1, \dots, n_{t-1}) \\
& N_1^{n_1}(y) N_2^{n_2}(y,m_1) \dots  N_{t-1}^{n_{t-1}}(y,m_1, \dots, m_{t-1}) \sum_{n_t} |a(x, \overline{n})|N_t^{n_t}(y,m_1, \dots, m_{t}) \\
&\leq \sum_{\overline{m}, n_1, \dots, n_{t-1}}|  b(y, \overline{m})|  M_1^{m_1}(x) M_2^{m_2}(x,n_1) \dots M_t^{m_t}(x,n_1, \dots, n_{t-1}) \\
& N_1^{n_1}(y) N_2^{n_2}(y,m_1) \dots  N_{t-1}^{n_{t-1}}(y,m_1, \dots, m_{t-1})\\
&=\sum_{\substack{m_1,\dots, m_{t-1}\\n_1,\dots, n_{t-1}}}   M_1^{m_1}(x) M_2^{m_2}(x,n_1) \dots M_t^{m_t}(x,n_1, \dots, n_{t-1}) \\
& N_1^{n_1}(y) N_2^{n_2}(y,m_1) \dots  N_{t-1}^{n_{t-1}}(y,m_1, \dots, m_{t-1})  \sum_{m_t} |  b(y,\overline{m})|M_t^{m_t}(x,n_1, \dots, n_{t-1}) \\
&\leq \sum_{\substack{m_1,\dots, m_{t-1}\\n_1,\dots, n_{t-1}}}   M_1^{m_1}(x) M_2^{m_2}(x,n_1) \dots M_t^{m_t}(x,n_1, \dots, n_{t-1}) \\
& N_1^{n_1}(y) N_2^{n_2}(y,m_1) \dots  N_{t-1}^{n_{t-1}}(y,m_1, \dots, m_{t-1})\leq 1.
\end{align*}
To see  the last inequality, it is enough to keep on summing in the same order, that is, in $n_{t-1}$, then in $m_{t-1}$, then in $n_{t-2}$, etc. 
\end{proof}

Now we can upper bound the value of $T$ with two way classical communication. We have 

\begin{proposition}\label{p:twbound}

Let $T$ be the XOR game from Definition \ref{game}. Then 

$$\omega_{tw, \log k} (T)\leq \frac{4\sqrt{2} e^{5/2}(\log k)^{3/2}}{n}.$$
\end{proposition}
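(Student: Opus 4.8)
The plan is to bound $\omega_{tw,\log k}(T) = \sup_{P} |\langle T, P\rangle|$ where $P$ ranges over correlations realizable with $t$ rounds of two-way classical communication totalling $\log k$ bits. By linearity and convexity it suffices to bound $|\langle T, P\rangle|$ for a \emph{deterministic} protocol, which then factors through the tensors $\overline{\mathbf a}, \overline{\mathbf b}$ of Equations (\ref{e:Alice}) and (\ref{e:Bob}): concretely,
\begin{align*}
\langle T, P\rangle = \sum_{\tilde x, y} T_{\tilde x, y} \sum_{\overline m, \overline n} \overline{\mathbf a}(\tilde x, \overline m, \overline n)\, \overline{\mathbf b}(y, \overline m, \overline n).
\end{align*}
Plugging in $T_{(x,z),y} = \frac1L \sum_{ij} x_i z_j y_{ij}$, the key is to regroup the sum so that one can apply the transposed Khintchine inequalities of Lemma \ref{l:transpose Khintchine}: the variables $x_i, z_j$ are ``Alice's'' and $y_{ij}$ is ``Bob's'', and summing $\frac1L\sum_{ij} x_i z_j y_{ij}\, \overline{\mathbf a}\, \overline{\mathbf b}$ against the $\pm1$ Fourier characters should convert the $\ell_1$-type sum into an $\ell_2$-type sum over the index pairs $(i,j)$, at the cost of a factor $(2^{2n})^{1/p'}$ on Alice's side and $(2^{n^2})^{1/p'}$ on Bob's side, with the Khintchine constant $b_{p'}^2$.

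First I would fix the order of summation dictated by the protocol: sum over Bob's input $y$ (all $2^{n^2}$ of them) and Alice's input $(x,z)$ (all $2^{2n}$ of them) on the outside, and handle the message indices $\overline m, \overline n$ using the alternating sup/sum structure of Lemmas \ref{l:boundednorm} and \ref{messagesbound}. For a fixed configuration of messages one isolates the bilinear form $\sum_{ij} x_i z_j y_{ij}$ and applies the double transposed Khintchine inequality in the $(x,z)$ variables to the coefficients $\alpha(x,z) = \sum_{\overline m, \overline n}(\dots)$, and the single transposed Khintchine inequality in the $y$-variables (or rather, its $n^2$-dimensional analogue where the index set is $\{\pm1\}^{n^2}$ and the linear functionals are $y\mapsto y_{ij}$). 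The exponents $p$ will be chosen at the end; the standard trick is to take $p$ (hence $p'$) growing like $\log k$ so that $k^{1/p'}$ stays bounded, and the Khintchine constant $b_{p'} \sim \sqrt{p'} \sim \sqrt{\log k}$, which is exactly where the $(\log k)^{3/2}$ in the statement comes from (one power from each Khintchine application, roughly, plus the bookkeeping). The estimate $L \geq \frac{1}{\sqrt2} n 2^{n^2+2n}$ from Remark \ref{estimate} then cancels the dimensional factors $2^{2n/p'} \cdot 2^{n^2/p'}$ up to the controlled $k^{O(1/p')}$ loss, and produces the $1/n$.

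The main obstacle — and, as the authors signal in the introduction, the technical heart of the argument — is handling the \emph{dependencies} between the message random variables: Alice's $i$-th message $M_i^{m_i}(x, n_1,\dots,n_{i-1})$ depends on Bob's earlier messages and vice versa, so one cannot simply pull the $(x,z)$-sum past the message weights as if they were constants. The resolution is to peel the rounds off one at a time from the inside out (exactly as in the proof of Lemma \ref{messagesbound}): after applying Khintchine to produce an $\ell_2$ norm of the partially-summed coefficients, one uses Lemma \ref{lemma1} (Hölder) together with the normalization $\sum_{m_i} M_i^{m_i} \le 1$, $\sum_{n_i} N_i^{n_i} \le 1$ to absorb each layer, keeping track of how many message-index factors $2^{c_i}, 2^{d_i}$ appear and checking that $\sum_i(c_i + d_i) = \log k$ so the total combinatorial factor is at most $k$ (or a small power of it). One must be careful that the convexity reduction to deterministic protocols is legitimate and that the Khintchine inequality is applied in the correct (transposed) direction at each stage — applying it in the wrong direction would blow up rather than contract the norm. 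Assembling the constants $b_{p'}^2 \le (c\sqrt{p'})^2$ with $p' = O(\log k)$, together with the $\sqrt2$ from Remark \ref{estimate} and the combinatorial factors, yields the claimed bound $\frac{4\sqrt2\, e^{5/2} (\log k)^{3/2}}{n}$.
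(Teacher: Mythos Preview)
Your ingredients are exactly those of the paper --- transposed Khintchine (Lemma \ref{l:transpose Khintchine}), H\"older (Lemma \ref{lemma1}), the message bound Lemma \ref{messagesbound}, the estimate on $L$, and the choice $p'=\log k$ --- but the order in which you propose to assemble them is muddled, and one step is internally inconsistent. The paper pulls the message sum \emph{outside} first via the triangle inequality and applies H\"older once to the $k=\prod_i 2^{c_i+d_i}$ message configurations:
\[
\Bigl|\sum_{\tilde x,y,\overline m,\overline n}\Bigr|\le \sum_{\overline m,\overline n}\Bigl|\sum_{\tilde x,y}\Bigr|\le k^{1/p'}\Bigl(\sum_{\overline m,\overline n}\Bigl|\sum_{\tilde x,y}\Bigr|^p\Bigr)^{1/p}.
\]
That single H\"older step is where the factor $k^{1/p'}$ comes from; there is no round-by-round ``peeling'' accumulating factors $2^{c_i},2^{d_i}$. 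Then, \emph{for each fixed} $(\overline m,\overline n)$, one writes $\sum_{\tilde x,y}T_{\tilde x,y}\,\overline{\mathbf a}\,\overline{\mathbf b}=\frac{1}{L}\sum_{ij}\bigl(\sum_{x,z}x_iz_j\,\overline{\mathbf a}\bigr)\bigl(\sum_y y_{ij}\,\overline{\mathbf b}\bigr)$, applies Cauchy--Schwarz in $i,j$, and then Lemma \ref{l:transpose Khintchine} separately to each factor. Your sentence ``for a fixed configuration of messages \dots applies Khintchine to the coefficients $\alpha(x,z)=\sum_{\overline m,\overline n}(\dots)$'' is self-contradictory: the coefficient fed to Khintchine is simply $\overline{\mathbf a}(x,z,\overline m,\overline n)$ with $\overline m,\overline n$ frozen, not a sum over them.

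The step you do not describe, and which replaces your proposed iterative peeling, is the recombination after Khintchine. Because $\overline{\mathbf a}$ does not depend on $y$ and $\overline{\mathbf b}$ does not depend on $(x,z)$,
\[
\sum_{\overline m,\overline n}\Bigl(\sum_{x,z}|\overline{\mathbf a}|^p\Bigr)\Bigl(\sum_y|\overline{\mathbf b}|^p\Bigr)=\sum_{x,z,y}\ \sum_{\overline m,\overline n}|\overline{\mathbf a}\,\overline{\mathbf b}|^p,
\]
and now Lemma \ref{messagesbound} (invoked once, as a black box) gives $\sum_{\overline m,\overline n}|\overline{\mathbf a}\,\overline{\mathbf b}|\le 1$, hence also $\sum_{\overline m,\overline n}|\overline{\mathbf a}\,\overline{\mathbf b}|^p\le 1$ for $p>1$, so the whole expression is at most $2^{2n+n^2}$. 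All of the protocol's dependency structure is hidden inside Lemma \ref{messagesbound}; within the proof of the proposition no interleaved H\"older-and-absorb argument is needed. With $p'=\log k$ and $b_{p'}\le \sqrt{2e\,p'}$ (three factors of $b_{p'}$, not two) the constants assemble to the stated bound.
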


\begin{proof} We assume there are $t$ rounds of communication with a total amount of bits exchanged of $\log k$. Therefore, $\log k=\sum_{i=1}^t c_i+d_i$, where $c_i, d_i$ are as in Section \ref{s:cc}. We also assume that Alice starts the communication, the other case being similar. 

As explained in Section \ref{s:cc}, it is enough to bound the quantity 

$$\sum_{\substack{\tilde{x}, y \\ \overline{m}, \overline{n}}}  T_{\tilde{x},y} {\bf \overline{a}}(\tilde{x},\overline{m},\overline{n}) {\bf \overline{b}}(y,\overline{m},\overline{n}),$$
when $ {\bf \overline{a}}(\tilde{x},\overline{m},\overline{n})$, $ {\bf \overline{b}}(y,\overline{m},\overline{n})$ are as in Equations (\ref{e:Alice}) and (\ref{e:Bob}).

We have 

\begin{align*}
&\sum_{\substack{\tilde{x}, y \\ \overline{m}, \overline{n}}}  T_{\tilde{x},y} {\bf \overline{a}}(\tilde{x},\overline{m},\overline{n}) {\bf \overline{b}}(y,\overline{m},\overline{n})\leq \sum_{ \overline{m}, \overline{n}} \left|\sum_{\tilde{x}, y } T_{\tilde{x},y} {\bf \overline{a}}(\tilde{x},\overline{m},\overline{n}) {\bf \overline{b}}(y,\overline{m},\overline{n})\right| \\
&\leq k^{\frac{1}{p'}}  \left(\sum_{ \overline{m}, \overline{n}} \left|\sum_{\tilde{x}, y } T_{\tilde{x},y} {\bf \overline{a}}(\tilde{x},\overline{m},\overline{n}) {\bf \overline{b}}(y,\overline{m},\overline{n})\right|^p\right)^\frac{1}{p}\\
&=\frac{k^{\frac{1}{p'}}}{L}  \left(\sum_{ \overline{m}, \overline{n}} \left|\sum_{(x,z), y } \sum_{i,j} x_i z_j y_{ij}  {\bf \overline{a}}(\tilde{x},\overline{m},\overline{n}) {\bf \overline{b}}(y,\overline{m},\overline{n})\right|^p\right)^\frac{1}{p}\\
&=\frac{k^{\frac{1}{p'}}}{L}  \left(\sum_{ \overline{m}, \overline{n}} \left|\sum_{i,j}\left( \sum_{(x,z) }  x_i z_j {\bf \overline{a}}(\tilde{x},\overline{m},\overline{n}) \right)\left(\sum_y y_{ij}   {\bf \overline{b}}(y,\overline{m},\overline{n})\right)\right|^p\right)^\frac{1}{p},
\end{align*}
where the second inequality follows from Lemma \ref{lemma1}. 

We note now that, for every choice of $\overline{m}, \overline{n}$,

\begin{align*}
& \left|\sum_{i,j}\left( \sum_{(x,z) }  x_i z_j {\bf \overline{a}}(x,z,\overline{m},\overline{n}) \right)\left(\sum_y y_{ij}   {\bf \overline{b}}(y,\overline{m},\overline{n})\right)\right| \\
&\leq \left(\sum_{i,j}\left( \sum_{(x,z) }  x_i z_j {\bf \overline{a}}(x,z,\overline{m},\overline{n}) \right)^2\right)^\frac{1}{2} \left(\sum_{i,j} \left(\sum_y y_{ij}   {\bf \overline{b}}(y,\overline{m},\overline{n})\right)^2\right)^\frac{1}{2} \\
&\leq b_{p'}^3 \left( 2^{2n+n^2}\right)^\frac{1}{p'}  \left( \sum_{x,z }  \left| {\bf \overline{a}}(x,z,\overline{m},\overline{n}) \right |^p\right)^\frac{1}{p} \left( \sum_{y }  \left| {\bf \overline{b}}(y,\overline{m},\overline{n}) \right |^p\right)^\frac{1}{p}, 
\end{align*}
where the first inequality follows from Cauchy-Schwartz inequality and the second one follows from Lemma \ref{l:transpose Khintchine}. 

Using this, we have that 
\begin{align*}
&\sum_{\substack{\tilde{x}, y \\ \overline{m}, \overline{n}}}  T_{\tilde{x},y} {\bf \overline{a}}(\tilde{x},\overline{m},\overline{n}) {\bf \overline{b}}(y,\overline{m},\overline{n}) \\
&\leq \frac{k^{\frac{1}{p'}}}{L}  b_{p'}^3 \left( 2^{2n+n^2}\right)^\frac{1}{p'} \left( \sum_{\overline{m}, \overline{n} } \left( \sum_{x,z }  \left| {\bf \overline{a}}(x,z,\overline{m},\overline{n}) \right |^p\right)  \left( \sum_{y }  \left| {\bf \overline{b}}(y, \overline{m},\overline{n}) \right |^p\right) \right)^\frac{1}{p}\\
&= \frac{k^{\frac{1}{p'}}}{L}  b_{p'}^3 \left( 2^{2n+n^2}\right)^\frac{1}{p'} \left(  \sum_{x,z,y } \sum_{\overline{m}, \overline{n} }   \left|    {\bf \overline{a}}(x,z,\overline{m},\overline{n})  {\bf \overline{b}}(y, \overline{m},\overline{n})\right  |^p \right)^\frac{1}{p} \\
& \leq  \frac{k^{\frac{1}{p'}}}{L}  b_{p'}^3 \left( 2^{2n+n^2}\right)^\frac{1}{p'}  \left( 2^{2n+n^2}\right)^\frac{1}{p},
\end{align*}
where in the last inequality we have used Lemma \ref{messagesbound} and the simple fact that, for every $1<p<\infty$,  if $$\sum_{\overline{m}, \overline{n} }   \left|    {\bf \overline{a}}(x,z,\overline{m},\overline{n})  {\bf \overline{b}}(y, \overline{m},\overline{n})\right  | \leq 1,$$
then also 
$$\sum_{\overline{m}, \overline{n} }   \left|    {\bf \overline{a}}(x,z,\overline{m},\overline{n})  {\bf \overline{b}}(y, \overline{m},\overline{n})\right  |^p \leq 1.$$

To finish, we use that $L\geq \frac{1}{\sqrt{2}}n2^{n^2+2n}$ by Remark \ref{estimate}. We also use that  $b_{p'}\leq \sqrt{2e p'}$ (see \cite[Section 8.5]{defantfloret}) and we make the choice $p'=\log k$. Then we have:

\begin{align*}
&\sum_{\substack{\tilde{x}, y \\ \overline{m}, \overline{n}}}  T_{\tilde{x},y} {\bf \overline{a}}(\tilde{x},\overline{m},\overline{n}) {\bf \overline{b}}(y,\overline{m},\overline{n})\leq \frac{4 e^\frac{5}{2} (\log k)^\frac{3}{2}}{n}.
\end{align*}

\end{proof}

Now, Propositions \label{p:twbound} and {lower bound XOR game} together prove Theorem \ref{main}.

\section[Tensor norm for two way communication]{The value of a game with two way classical communication as a tensor norm}\label{s:tensor norm}

The purpose of this section is to show that the value of any XOR game assisted with a general two way classical communication protocol can be described by a norm in the tensor of  certain Banach spaces. In order to make this work self contained, the required notions and definitions from Banach space theory and tensor norm theory will be presented here. 

Given a normed space $X$, denote by $\|\cdot\|_X$ its norm, and by $B_X=\{x\in X\text{ such that }\|x\|_X\leq1\}$ its unit ball. The dual space consists of the linear and continuous maps from $X$ to the scalar field ($\mathbb{R}$ in our case) and it is denoted by $X^*$. The norm of the dual space has the natural expression $\|x^*\|_{X^*}=\sup_{x\in B_X}\abs{\braket{x^*,x}}$.

All Banach spaces considered in this article are finite dimensional. In particular we are interested in the spaces $\ell_1^R$ and $\ell_\infty^R$, and their combination which we describe below.  

Given a Banach space $X$, we will define the spaces  $\ell_1^R(X)$ and $\ell_\infty^R(X)$: As vector spaces, they are just the spaces whose elements are sequences of $R$ elements in $X$. Given one such element  $u=\{x_i\}_{i=1}^R$ with $x_i\in X$, their norms are defined as follows:
\begin{align*}
&\|u\|_{\ell_1^R(X)}=\sum_{i=1}^R\|x_i\|_X, \\
&\|u\|_{\ell_\infty^R(X)}=\max_{1\leq i \leq R} \|x_i\|_X.
\end{align*} 

With this definition at hand, we will consider the spaces $\ell_1^R(\ell_\infty^S)$,  $\ell_\infty^R(\ell_1^S)$ and further concatenation of these spaces. For example, the element 
\begin{equation}\label{Eq:z} z=\{z(x_1,a_1,x_2,a_2,\ldots,x_t,a_t)\}_{x_1,a_1,\ldots,x_t,a_t}\in \mathbb{R}^{R_1S_1\ldots R_t S_t}\end{equation}
can be seen as an element in  the space $\ell_\infty^{R_1}(\ell_1^{S_1}(\ldots\ell_\infty^{R_t}(\ell_1^{S_t})\ldots))$. Considered in that space, the norm of $z$ is 
$$\|z\|_{\ell_\infty^{R_1}(\ell_1^{S_1}(\ldots\ell_\infty^{R_t}(\ell_1^{S_t})\ldots))}=\max_{x_1}\sum_{a_1}\ldots\max_{x_t}\sum_{a_t} \abs{z(x_1,a_1,\ldots,x_t,a_t)}.$$
Note the similarity of this expression with the one appearing in Lemma \ref{l:boundednorm}.

Recall that a sequence of $R$ elements in $X$, $u=\{x_i\}_{i=1}^R$ can be naturally   seen as an element in the tensor product $\mathbb{R}^R\otimes X$, the identification being  $u=\sum_{i=1}^R e_i\otimes x_i$, where $e_i$ are the vectors of the canonical basis of $\mathbb R^R$. Hence, the element $z$ mentioned in (\ref{Eq:z}) can be naturally identified with an element in  $\mathbb{R}^{R_1}\otimes\mathbb{R}^{S_1}\otimes\ldots\otimes\mathbb{R}^{R_t}\otimes\mathbb{R}^{S_t}$.

Given two finite dimensional Banach spaces $X$ and $Y$, the tensor product  $X\otimes Y$ can be endowed with different norms compatible with the norm structure of $X$ and $Y$, giving raise to different Banach spaces. This is the core idea of tensor norm theory. In this work, we will need the so called $\epsilon$-norm.  The following definition of the $\epsilon$-norm,  toghether with basic properties thereof,  can be seen, for instance,  in \cite{defantfloret,ryan}. 

Given two normed spaces $X, Y$ and an element  $u=\sum_{i=1}^L x_i\otimes y_i$ in  $X\otimes Y$, the $\epsilon$-norm of $u$ is defined by:
 
\begin{align}\label{Def_eps_pi}
\|u\|_{X\otimes_\epsilon Y}&=\sup\Big\{\abs{\sum_{i=1}^L |x^*(x_i)| | y^*(y_i)|}:\,  x^*\in B_{X^*},y^*\in B_{Y^*}	\Big\}.
\end{align}

We will use the notation $X\otimes_\epsilon Y$ to refer to the space $X\otimes Y$ endowed with the $\epsilon$-norm.

Some basic notions about convexity will also be needed. Recall that a set $A$ is convex if given $x$ and $y$ in $A$, then $\lambda x+(1-\lambda)y$ is in $A$ for all $\lambda$ in $[0,1]$. Given a set with $n$ elements $B=\{x_1,\ldots,x_n\}$, we define the convex hull of $B$ as: 
$$co(B)=\{\sum_{i=1}^{n}\alpha_i x_i\text{ such that }x_i\in ,\alpha_i\geq0,\sum_{i=1}^{n} \alpha_i=1\}$$

An extreme point of a set $A$ is a point which does not lie in any open line segment joining two points in the set. That is, if $y$ is an extreme point of $A$ and we can write $y=\lambda x_1+(1-\lambda) x_2$ with $x_1$ and $x_2$ in $A$, and with $x_1\neq x_2$, then $\lambda$ is either 0 or 1. It is well known and easy to see that every convex set coincides with  the convex hull of its extreme points. 

The proof of the following two lemmas follows immediately from the definitions involved. 

\begin{lemma}\label{extremal1}
Denoting by $\{e_i\}_{i=1}^R$ to the elements of the canonical basis of $\mathbb R^R$, we have: 
\begin{enumerate}
\item The extreme points of $B_{\ell_\infty^R}$ are exactly the elements of the form $\sum_{i=1}^R a_i e_i$, where  $a_i=\pm 1$ for every $i$.  

\item  The extreme points of $B_{\ell_1^R}$ are exactly the elements of the form $a_i e_i$, where $a_i=\pm 1$.

\item  Given a Banach space $X$, the extreme points of $B_{\ell_\infty^R (X)}$ are exactly the elements of the form $\sum_{i=1}^R e_i \otimes x_i$, where $x_i$ is an extreme point of $B_X$ for every $i$, and we use the tensor notation to identify  $\ell_\infty^R(X)$ and $\ell_\infty^R \otimes X$

\item Given a Banach space $X$, the extreme points of $B_{\ell_1^R(X)}$ are exactly the elements of the form $e_i \otimes x_i$, where $x_i$ is an extreme point of $B_X$ and we use the tensor notation as above.
\end{enumerate}

\end{lemma}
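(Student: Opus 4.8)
The plan is to verify all four statements by the elementary ``segment test'': a point of a convex set is \emph{not} an extreme point precisely when it lies in the relative interior of a line segment contained in the set. First I would note that (1) and (2) are just the scalar cases $X=\mathbb{R}$ of (3) and (4), since $\ell_\infty^R=\ell_\infty^R(\mathbb{R})$ and $\ell_1^R=\ell_1^R(\mathbb{R})$, and the extreme points of $B_{\mathbb{R}}=[-1,1]$ are exactly $\pm1$ (if $|a|<1$ then $a=\frac12(a-\delta)+\frac12(a+\delta)$ for small $\delta$; if $a=\pm1$ and $a=\lambda s+(1-\lambda)t$ with $s,t\in[-1,1]$ and $\lambda\in(0,1)$, then necessarily $s=t=a$). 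So it is enough to prove (3) and (4).

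For (3), the point is that, directly from the definition of the $\ell_\infty^R(X)$ norm, $B_{\ell_\infty^R(X)}$ is the $R$-fold product $B_X\times\cdots\times B_X$. If some coordinate $x_{i_0}$ of $(x_i)_{i=1}^R\in B_{\ell_\infty^R(X)}$ is not an extreme point of $B_X$, write $x_{i_0}=\lambda p+(1-\lambda)q$ with $p\neq q$ in $B_X$ and $\lambda\in(0,1)$, and perturb only the $i_0$-th coordinate: this exhibits $(x_i)$ as a nontrivial convex combination of two points of $B_{\ell_\infty^R(X)}$, so it is not extreme. Conversely, if every $x_i$ is an extreme point of $B_X$ and $(x_i)=\lambda(u_i)+(1-\lambda)(v_i)$ with $(u_i),(v_i)\in B_{\ell_\infty^R(X)}$ and $\lambda\in(0,1)$, then coordinatewise $x_i=\lambda u_i+(1-\lambda)v_i$ with $u_i,v_i\in B_X$, so extremality of $x_i$ gives $u_i=v_i=x_i$ for every $i$; hence $(x_i)$ is extreme.

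For (4), recall that $\|(x_i)\|_{\ell_1^R(X)}=\sum_i\|x_i\|_X$. I would first argue that an extreme point $(x_i)$ of $B_{\ell_1^R(X)}$ must be concentrated on a single index, with a unit entry there: (a) if $\sum_i\|x_i\|_X<1$, then $(x_i)$ is a midpoint of the segment $[(1-\delta)(x_i),(1+\delta)(x_i)]$ for small $\delta>0$ when $(x_i)\neq 0$, and a midpoint of $[v,-v]$ for any $0\neq v\in B_{\ell_1^R(X)}$ when $(x_i)=0$; (b) if two distinct indices $j,k$ have $x_j,x_k\neq 0$, replacing $x_j$ by $(1+\delta/\|x_j\|_X)x_j$ and $x_k$ by $(1-\delta/\|x_k\|_X)x_k$ (and the reverse) leaves $\sum_i\|x_i\|_X$ unchanged and displays a nontrivial segment through $(x_i)$ for small $|\delta|$; (c) thus $(x_i)=e_{i_0}\otimes\xi$ with $\|\xi\|_X=1$, and if $\xi$ were not extreme in $B_X$, lifting a nontrivial convex representation of $\xi$ into slot $i_0$ would contradict extremality. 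For the converse, let $x=e_{i_0}\otimes\xi$ with $\xi$ extreme in $B_X$, and suppose $x=\lambda u+(1-\lambda)v$ with $u,v\in B_{\ell_1^R(X)}$, $\lambda\in(0,1)$. Applying the triangle inequality in $X$ slot by slot gives the chain $1=\|x\|_{\ell_1^R(X)}=\sum_i\|\lambda u_i+(1-\lambda)v_i\|_X\le\sum_i\bigl(\lambda\|u_i\|_X+(1-\lambda)\|v_i\|_X\bigr)=\lambda\|u\|_{\ell_1^R(X)}+(1-\lambda)\|v\|_{\ell_1^R(X)}\le1$, so every inequality is an equality; in particular, for each $i\neq i_0$ we get $\lambda\|u_i\|_X+(1-\lambda)\|v_i\|_X=\|\lambda u_i+(1-\lambda)v_i\|_X=0$, which forces $u_i=v_i=0$. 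Hence $u=e_{i_0}\otimes u_{i_0}$ and $v=e_{i_0}\otimes v_{i_0}$ with $\|u_{i_0}\|_X=\|v_{i_0}\|_X=1$ and $\xi=\lambda u_{i_0}+(1-\lambda)v_{i_0}$; extremality of $\xi$ yields $u_{i_0}=v_{i_0}=\xi$, so $x$ is extreme.

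I do not anticipate a genuine difficulty: all of this is just the definitions of the norms combined with the segment characterization of extreme points, which is exactly why the authors call it immediate. The one place deserving a moment of care is the converse in (4), where one must extract the equality case from the above chain of triangle inequalities — strict convexity of $\|\cdot\|_X$ is not available for a general $X$, but additivity of the $\ell_1$-sum together with $\lambda,1-\lambda>0$ already pins down the coordinates outside slot $i_0$.
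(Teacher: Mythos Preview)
Your proof is correct and is precisely the direct verification from the definitions that the paper intends: the paper does not write out any argument for this lemma, merely stating that it ``follows immediately from the definitions involved.'' Your careful handling of the converse in (4) via the equality case of the chain of triangle inequalities is exactly the detail one would supply if asked to make that phrase explicit.
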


In the reasonings below, it will be useful to write $e_i \otimes x_i$ as $\sum_{j=1}^R
\delta_{i,j} e_j\otimes x_i $.

The following result characterizes the extreme points of the unit ball of the space $\ell_\infty^{R_1}(\ell_1^{S_1}(\ldots\ell_\infty^{R_t}(\ell_1^{S_t})\ldots))$.

\begin{lemma}\label{extremal} The extreme points of the unit ball of  $\ell_\infty^R (\ell_1^{2^{c_1}}(\ell_\infty^{2^{d_1}}(\ldots(\ell_1^{2^{c_t}}(\ell_\infty^{2^{d_t}}))\ldots)))$  are exactly the elements of  the form: 

$$\sum_{x,\overline{m}, \overline{n}}z_{x,\overline{n}}\delta_{m_1,m_1(x)}\delta_{m_2,m_2(x,n_1)}\ldots\delta_{m_t,m_t(x,n_1,\ldots,n_{t-1})}e_x\otimes e_{m_1}\otimes e_{n_1}\otimes\ldots\otimes e_{m_t}\otimes e_{n_t},$$
where $z_{x,n_1,\ldots,n_t}=\pm1$ for all $x$, $n_1,\ldots,n_t$ and $m_1:[R]\rightarrow[2^c_1]$, $m_2:[R]\times[2^{d_1}]\rightarrow[2^{c_2}]$ and so on, are functions.

Similarly, the extreme points of the unit ball of $\ell_\infty^{S 2^{c_1}}(\ell_1^{2^{d_1}}(\ell_\infty^{2^{c_2}}(\ell_1^{2^{d_2}}(\ldots(\ell_\infty^{2^{c_t}}(\ell_1^{2^{d_t}}))\ldots))))$ are exactly the elements with the form: 

$\sum_{y,n_1,m_1,\ldots, n_t,m_t}z_{y,m_1,\ldots,m_t}\delta_{n_1,n_1(y,m_1)}\delta_{n_2,n_2(y,m_1,m_2)}\ldots\delta_{n_t,n_t(y,m_1,m_2,\ldots,m_t)}e_y\otimes e_{m_1}\otimes e_{n_1}\otimes\ldots\otimes e_{m_t}\otimes e_{n_t}$, 
where, similarly as above,  $z_{x,m_1,\ldots,m_t}=\pm1$ for all $x$, $m_1,\ldots,m_t$ and $n_1:[S]\rightarrow[2^{d_1}]$, $n_2:[S]\times[2^{c_1}]\rightarrow[2^{d_2}]$ and so on, are functions. 

\end{lemma}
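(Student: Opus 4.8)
The plan is to deduce Lemma~\ref{extremal} from Lemma~\ref{extremal1} by peeling the iterated $\ell_\infty$/$\ell_1$ layers off one at a time, from the outermost down to the innermost, while keeping careful track of the dependencies that the selected indices accumulate. For the first (Alice) space, write it as $\ell_\infty^R(Z)$ with $Z=\ell_1^{2^{c_1}}(\ell_\infty^{2^{d_1}}(\cdots\ell_1^{2^{c_t}}(\ell_\infty^{2^{d_t}})\cdots))$. By part~(3) of Lemma~\ref{extremal1}, an extreme point has the form $\sum_x e_x\otimes u_x$ with each $u_x$ an extreme point of $B_Z$, so the index $x$ is ``exposed'' and an extreme point of the inner ball is chosen independently for each value of $x$. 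Applying part~(4) to the next layer gives $u_x=e_{m_1(x)}\otimes v_x$ for some $m_1(x)\in[2^{c_1}]$ and some extreme point $v_x$ of the new inner ball, the selected index being allowed to depend on $x$; this is precisely the function $m_1\colon[R]\to[2^{c_1}]$. Continuing, part~(3) applied to the $\ell_\infty^{2^{d_1}}$ layer exposes $n_1$, part~(4) applied to the $\ell_1^{2^{c_2}}$ layer selects $m_2=m_2(x,n_1)$, and so on, alternating, so that at the $i$-th $\ell_1$ layer one selects $m_i$ as a function of $x$ and of the already exposed $n_1,\dots,n_{i-1}$. The innermost layer $\ell_\infty^{2^{d_t}}$ contributes, by part~(1), a sign vector $\sum_{n_t}z_{n_t}e_{n_t}$ with $z_{n_t}=\pm1$, one such vector for each value of the accumulated context $(x,n_1,\dots,n_{t-1})$, which is the family $(z_{x,\overline n})\in\{\pm1\}$. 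Rewriting each selected basis vector via $e_{m_i(\cdots)}=\sum_{m_i}\delta_{m_i,m_i(\cdots)}e_{m_i}$ (as noted right after Lemma~\ref{extremal1}) and using the identification of the nested $\ell_p$-spaces with the corresponding tensor product, one collects everything into the stated expression. Since Lemma~\ref{extremal1} is itself a biconditional, iterating it yields the ``exactly'' in both directions at once.

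The second (Bob) space is handled identically, with two cosmetic changes: the outermost space $\ell_\infty^{S2^{c_1}}$ is read as indexed by the pair $(y,m_1)$, so both $y$ and $m_1$ are exposed at the start (hence $n_1$ may depend on $(y,m_1)$); and the innermost layer is $\ell_1^{2^{d_t}}$ rather than $\ell_\infty^{2^{d_t}}$, so by part~(2) of Lemma~\ref{extremal1} it contributes a single selected index $n_t=n_t(y,m_1,\dots,m_t)$ together with a sign $z_{y,m_1,\dots,m_t}=\pm1$, rather than a full sign vector over $n_t$.

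The only point needing genuine care is the dependency bookkeeping: one must verify that the index selected at the $\ell_1^{2^{c_i}}$ layer may depend on $x,n_1,\dots,n_{i-1}$ but on nothing more—in particular not on $m_1,\dots,m_{i-1}$—because an $\ell_1$ layer collapses its index to a single value and so contributes no free index to the context, whereas an $\ell_\infty$ layer does. The cleanest way to make this rigorous is an induction on $t$, peeling off one $\ell_\infty$ layer and one $\ell_1$ layer per step: the inductive hypothesis describes the extreme points of the inner $(t-1)$-round space as a function of the index its top layer exposes, and prepending $\ell_\infty^R$ and $\ell_1^{2^{c_1}}$ via parts~(3) and~(4) extends the description with the new outermost index $x$ and the new selection $m_1(x)$. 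No convexity beyond the explicit extreme-point descriptions of Lemma~\ref{extremal1} is needed for this statement.
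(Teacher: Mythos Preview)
Your proposal is correct and takes essentially the same approach as the paper: both reduce the lemma to repeated applications of Lemma~\ref{extremal1}, tracking how $\ell_\infty$ layers expose a free index while $\ell_1$ layers collapse to a selected index that becomes a function of the already-exposed indices. The only cosmetic difference is direction---you peel layers from the outside in, while the paper builds from the inside out (working the $t=2$ Bob case explicitly)---and your dependency bookkeeping is, if anything, spelled out more carefully than in the paper.
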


\begin{proof}
The proof follows easily from Lemma \ref{extremal1} and induction. For the sake of clarity we write out the proof for the case of  $\ell_\infty^{S 2^{c_1}}(\ell_1^{2^{d_1}}(\ell_\infty^{2^{c_2}}(\ell_1^{2^{d_2}})))$, which corresponds to $t=2$ in the second statement of the Lemma.

First note that following Lemma \ref{extremal1} and the notation following it,  the extreme elements of the unit ball of $\ell_\infty^{2^{c_2}}(\ell_1^{2^{d_2}})$ are of the form  

$$\sum_{m_2, n_2=1}^{2^{c_2}, 2^{d_2}} z_{m_2}\delta_{n_2,n_2(m_2)}e_{m_2}\otimes e_{n_2},$$

where $n_2: [2^{c_2}]\rightarrow [2^{d_2}]$ runs over all possible functions  and $z_{m_2}=\pm1$ for all $m_2$. 

Then, with the aid of the $\delta$ notation, the extreme points of the of the unit ball of $\ell_1^{2^{d_1}}(\ell_\infty^{2^{c_2}}(\ell_1^{2^{d_2}}))$ can be written as 

\begin{align}\label{exex}\sum_{n_1=1}^{2^{d_1}}\sum_{n_2,m_2}z_{m_2}\delta_{n_2,n_2(m_2)}\delta_{n_1,n_0}e_{n_1}\otimes e_{m_2}\otimes e_{n_2},\end{align} 
where $n_0\in [2^{d_1}]$.

Finally, to describe the extreme points of the unit ball of $\ell_\infty^{S 2^{c_1}}(\ell_1^{2^{d_1}}(\ell_\infty^{2^{c_2}}(\ell_1^{2^{d_2}})))$, first note that $\mathbb R^{S 2^{c_1}} = \mathbb R^{S}\otimes \mathbb R^{ 2^{c_1}}$. Then, applying again Lemma \ref{extremal1}, for every $y$ and $m_1$, we obtain that the extreme points of the unit ball of  $\ell_\infty^{S 2^{c_1}}(\ell_1^{2^{d_1}}(\ell_\infty^{2^{c_2}}(\ell_1^{2^{d_2}})))$ are exactly those of the form

$$\sum_{y, m_1=1}^{S, 2^{c_1}}e_y\otimes e_{m_1} \otimes \left(\sum_{n_1=1}^{2^{d_1}}\sum_{n_2,m_2}z_{m_2}\delta_{n_2,n_2(m_2)}\delta_{n_1,n_0}e_{n_1}\otimes e_{m_2}\otimes e_{n_2}\right). $$

In that expression, the functions $z_{m_2}$, $\delta_{n_2,n_2(m_2)}$ and $\delta_{n_1,n_0}$  depend also on $y$ and $m_1$, and therefore we can rewrite the formula above as:
$$ \sum_{y, m_1,n_1,m_2,n_2} z_{y,m_1, m_2} \delta_{n_1,n_1(m_1,y)}\delta_{n_2,n_2(y,m_1,m_2)} e_y\otimes e_{m_1}\otimes e_{n_1}\otimes e_{m_2}\otimes e_{n_2},$$
where  $n_2:[S]\times[2^{c_1}]\times[2^{c_2}]\rightarrow[2^{d_2}]$ and $n_1:[S]\times[2^{c_1}]\rightarrow[2^{d_1}]$ are functions  and $z_{y,m_1,m_2}=\pm1$ for all $y,m_1,m_2$. \end{proof}

Using the lemmas stated above, we can now find an expression for the value of a XOR game $T=(T_{xy})_{x,y=1}$ with the protocol defined in Section \ref{s:cc}, in which there is a total amount of $c$-bits of two way communication exchanged, in $t$ different rounds. The messages sent by Alice to Bob use $c_1$ to $c_t$ bits respectively, and the ones sent by Bob to Alice, $d_1$ to $d_t$, respectively. Hence $\sum_{i=1}^t c_i + \sum_{i=1}^t d_i=c$. 
\begin{proof}[Proof of Theorem \ref{norma}]
Considering the supremum below in the possible strategies of Alice and Bob, we have 

\begin{align*}
\omega_{tw,c}(T)=&\sup \sum_{x,\overline{m}, \overline{n}} T_{x,y} M_1^{m_1}(x)N_1^{n_1}(y,m_1)\dots \\ 
 & \dots M_t^{m_t}(x,n_1,\dots, n_{t-1}) N_t^{n_t}(y, m_1,\dots, m_{t}) a(x, \overline{n}) b(y,\overline{m})=\\
=&\sup \sum_{x,y,\overline{m}, \overline{n}, \overline{m}', \overline{n}'} \delta_{m_1,m'_1}\dots \delta_{m_t,m'_t} \delta_{n_1,n'_1}\dots \delta_{n_t,n'_t}T_{x,y} M_1^{m_1}(x)N_1^{n'_1}(x,m'_1)\dots \\ 
& \dots M_t^{m_t}(x,n_1,\dots, n_{t-1}) N_t^{n'_t}(y,m'_1,\dots,m'_t) a(x, \overline{n}) b(y,m'_1, \dots, m'_t)=\\
=& \langle T\otimes id\otimes\dots\otimes id | \sum_{\substack{x,m_1,\dots, m_t,\\ n_1\dots, n_t}}a(x, \overline{n})M_1^{m_1}(x)\dots \\
&M_t^{m_t}(x,n_1,\dots, n_{t-1}) e_x\otimes e_{m_1}\otimes \dots \otimes e_{m_t} \otimes e_{n_1}\otimes \dots \otimes e_{n_t}\otimes \\
&\sum_{\substack{y, m'_1,\dots, m'_t,\\ n'_1\dots, n'_t}}  b(y,m'_1, \dots, m'_t) N_1^{n'_1}(x,m'_1)\dots N_t^{n'_t}(y,m'_1,\dots,m'_t) \\
& e_y\otimes e_{m'_1}\otimes \dots \otimes e_{m'_t} \otimes e_{n'_1}\otimes \dots \otimes e_{n'_t} \rangle\\
\end{align*}

We recommend the reader to write the formula above in the case $t=2$.

It follows now immediately from the  definitions and Lemma \ref{l:boundednorm} that $$\omega_{tw,c}(T)\leq \|T\otimes id\otimes\dots\otimes id\|_{\ell_1^R(\ell_\infty^{2^{c_1}}(\ell_1^{2^{d_1}}(\ldots(\ell_\infty^{2^{c_t}}(\ell_1^{2^{d_t}})))))\otimes_\epsilon\ell_1^{S 2^{c_1}}(\ell_\infty^{2^{d_1}}(\ldots(\ell_1^{2^{c_t}}(\ell_\infty^{2^{d_t}}))))}$$

In order to prove  the reverse inequality, note first that it follows from the definitions that $\|T\|_{\ell_1^R(\ell_\infty^{2^{c_1}}(\ell_1^{2^{d_1}}(\ldots(\ell_\infty^{2^{c_t}}(\ell_1^{2^{d_t}})))))\otimes_\epsilon\ell_1^{S 2^{c_1}}(\ell_\infty^{2^{d_1}}(\ldots(\ell_1^{2^{c_t}}(\ell_\infty^{2^{d_t}}))))}$ coincides with  
$$\sup\left\{\left\langle T \otimes id\otimes\dots\otimes id| {\bf a}\otimes {\bf b} \right\rangle \right\}$$ when ${\bf a}\in B_{\ell_\infty^R(\ell_1^{2^{c_1}}(\ell_\infty^{2^{d_1}}(\ldots(\ell_1^{2^{c_t}}(\ell_\infty^{2^{d_t}})))))}$ and ${\bf b}\in B_{\ell_\infty^{S 2^{c_1}}(\ell_1^{2^{d_1}}(\ldots(\ell_\infty^{2^{c_t}}(\ell_1^{2^{d_t}}))))}$. It follows now from compactness and convexity that the supremum above is actually a maximum which will be attained on extreme points ${\bf a, b}$ of the respective unit balls.

Now, Remark \ref{extremal} tells us that the extreme points of $B_{\ell_\infty^R(\ell_1^{2^{c_1}}(\ell_\infty^{2^{d_1}}(\ldots(\ell_1^{2^{c_t}}(\ell_\infty^{2^{d_t}})))))}$ have the form 

$$\sum_{x,n_1,m_1,\ldots ,n_t,m_t}z_{x,n_1,\ldots,n_t}\delta_{m_1,m_1(x)}\delta_{m_2,m_2(x,n_1)}\ldots\delta_{m_t,m_t(x,n_1,\ldots,n_{t-1})}e_x\otimes e_{\gamma_1}\otimes e_{\delta_1}\otimes\ldots\otimes e_{\gamma_t}\otimes e_{\delta_t}$$

which can be seen according to (\ref{e:Alice}) as a deterministic strategy for Alice in which the final answer is $z_{x,n_1,\ldots,n_t}$ and the messages that she has send are $m_1(x)$, $m_2(x,n_1)$,... and $m_t(x,n_1,\ldots,n_{t-1})$. 

We proceed similarly for the extreme points of $B_{\ell_\infty^{S 2^{c_1}}(\ell_1^{2^{d_1}}(\ldots(\ell_\infty^{2^{c_t}}(\ell_1^{2^{d_t}}))))}$. 

\end{proof}

\section*{acknowledgment}
This research was funded by the Spanish MINECO through Grant No. MTM2017-88385-P and MTM2014-54240-P.


\begin{thebibliography}{20}

\bibitem{Bell} J.S. Bell, \emph{On the Einstein-Poldolsky-Rosen paradox}, Physics, 1, 195 (1964).

\bibitem{reviewnonlocality} N. Brunner, D. Cavalcanti, S. Pironio, V. Scarani, S. Wehner, \emph{Bell nonlocality}, Rev. Mod. Phy. 86 419 (2014).

\bibitem{defantfloret}A. Defant and K. Floret, \emph{Tensor Norms and Operator Ideals}, North-Holland, Amsterdam (1993).

\bibitem{JPV} M. Junge, C. Palazuelos, I. Villanueva, \emph{Classical vs. Quantum Communication in XOR games}, Quantum Information Processing, 17:117 (2018). 

\bibitem{nisan} E. Kushilevitz, N. Nisan, \emph{Communication Complexity}. Cambridge University Press (2006).

\bibitem{PVreview} C. Palazuelos and T. Vidick, \emph{Survey on nonlocal games and operator space theory}, Journal of Mathematical Physics 57, 015220 (2016).

\bibitem{expcom} O. Regev and B. Klartag, \emph{Quantum one-way communication can be exponentially stronger than classical communication}, Proceedings of the 43rd Annual ACM Symposium on Theory of Computing, 31-40 (ACM, San Jose, CA, USA, 2011).

\bibitem{ryan} R. A. Ryan, \emph{Introduction to Tensor Products of Banach Spaces}, ISBN 978-1-4471-3903-4 (2002).

\bibitem{Tsirelson} B. S. Tsirelson, \emph{Some results and problems on quantum Bell-type inequalities}, Hadronic J. Supp. 8(4), 329-345 (1993).



\end{thebibliography}
\end{document}